\documentclass{IEEEtran}

\usepackage{amsmath}
\usepackage{amssymb}
\usepackage{amsthm}
\usepackage{cite}
\usepackage{graphicx}
\usepackage{verbatim}
\usepackage{bbm}
\usepackage{subfigure}
\usepackage{url}
\usepackage{mathrsfs}
\usepackage{enumerate}
\usepackage{overpic}
\usepackage{booktabs}

\sloppy

% un-comment this to see some more details about the proofs

%\newcommand{\ZvikaComment}[1]{}

\newcommand{\mb}[1]{{\boldsymbol #1}}

\renewcommand{\b}{\mb{b}}

\renewcommand{\d}{\mb{d}}
\newcommand{\e}{\mb{e}}

\renewcommand{\r}{\mb{r}}
\renewcommand{\u}{\mb{u}}
\newcommand{\bv}{\mb{v}}
\newcommand{\w}{\mb{w}}
\newcommand{\x}{\mb{x}}
\newcommand{\y}{\mb{y}}

\newcommand{\D}{\mb{D}}

\newcommand{\I}{\mb{I}}
\newcommand{\J}{\mb{J}}

\newcommand{\M}{\mb{M}}

\newcommand{\U}{\mb{U}}

\newcommand{\XX}{{\mathfrak X}}

\newcommand{\xmax}{|x_{\max}|}
\newcommand{\xmin}{|x_{\min}|}

\newcommand{\zero}{\mb{0}}

\newcommand{\beq}{\begin{equation}}
\newcommand{\eeq}{\end{equation}}

\newcommand{\hx}{{\widehat{\x}}}
\newcommand{\tx}{{\widetilde{\x}}}

\newcommand{\xor}{{\hx_{\mathrm{or}}}}
\newcommand{\xbth}{{\hx_{\mathrm{BTH}}}}
\newcommand{\xbomp}{{\hx_{\mathrm{BOMP}}}}
\newcommand{\xomp}{{\hx_{\mathrm{OMP}}}}

\newcommand{\eps}{\varepsilon}

\newcommand{\hS}{{\widehat{S}}}

\newcommand{\CC}{{\mathbb C}}
\newcommand{\RR}{{\mathbb R}}

\newcommand{\MSE}{{\mathrm{MSE}}}

\newcommand{\E}[1]{{\mathbb E} \! \left\{ #1 \right\}}
\newcommand{\Ex}[1]{{\mathbb E}_\x \! \left\{ #1 \right\}}
\renewcommand{\Pr}[1]{\mathrm{Pr} \kern -1pt \left\{ #1 \right\}}
\newcommand{\pd}[2]{\frac{\partial #1}{\partial #2}}

\newcommand{\Ra}[1]{{{\mathcal R}\! \left( #1 \right) }}

\newcommand{\pinv}{\dagger}

\newtheorem{theorem}{Theorem}
\newtheorem{lemma}{Lemma}

\newtheorem{corollary}{Corollary}

\DeclareMathOperator{\Tr}{Tr}

\DeclareMathOperator{\supp}{supp}
\DeclareMathOperator*{\argmax}{arg\,max}
\DeclareMathOperator*{\argmin}{arg\,min}

\begin{document}

\title{Near-Oracle Performance of Greedy Block-Sparse Estimation Techniques from Noisy Measurements}
\author{Zvika~Ben-Haim,~\IEEEmembership{Student~Member,~IEEE}, and~Yonina~C.~Eldar,~\IEEEmembership{Senior~Member,~IEEE}\thanks{%
The authors are with the Department of Electrical Engineering, Technion---Israel Institute of Technology, Haifa 32000, Israel (e-mail: \{zvikabh@tx, yonina@ee\}.technion.ac.il). This work was supported in part by a Magneton grant from the Israel Ministry of Industry and Trade, by the Israel Science Foundation under Grant 1081/07, and by the European Commission's FP7 Network of Excellence in Wireless COMmunications NEWCOM++ (grant agreement no.\ 216715).}}
\maketitle

\begin{abstract}
This paper examines the ability of greedy algorithms to estimate a block sparse parameter vector from noisy measurements. In particular, block sparse versions of the orthogonal matching pursuit and thresholding algorithms are analyzed under both adversarial and Gaussian noise models. In the adversarial setting, it is shown that estimation accuracy comes within a constant factor of the noise power. Under Gaussian noise, the Cram\'er--Rao bound is derived, and it is shown that the greedy techniques come close to this bound at high SNR\@. The guarantees are numerically compared with the actual performance of block and non-block algorithms, highlighting the advantages of block sparse techniques.
\end{abstract}

%%%%%%%%%%%%%%%%%%%%%%%%%%%%%%%%%%%%%%%%%%%%%%%%%%%%%%%%%%%%%%%%%%%%%%%%%%%%%%%%%%%%%%%%%%%%%%%%%%%%%%%%%%%%%%%%
\section{Introduction}
\label{se:intro}

The success of signal processing techniques depends to a large extent on the availability of an appropriate model which captures our knowledge of the system under consideration and translates it to a productive mathematical framework. There is consequently an ongoing search for mathematical models which can accurately describe real-world signals. In recent years, much research has been devoted to the sparse representation model, which stems from the observation that many signals can be approximated using a small number of elements, or ``atoms,'' chosen from a large dictionary \cite{donoho06b, candes06, bruckstein09}. Thus, we may write $\y = \D\x + \w$, where the signal $\y$ is a linear combination of a small number of columns of the dictionary matrix $\D$, corrupted by noise $\w$. Since only a small number of elements of $\D$ are required for this representation, the vector $\x$ is sparse, i.e., most of its entries equal $0$. It turns out that the sparsity assumption can be used to accurately estimate $\x$ from $\y$, even when the number of possible atoms (and thus, the length of $\x$) is greater than the number of measurements in $\y$ \cite{tropp06, donoho06, candes06}. This model has been used to great advantage in many fundamental fields of signal processing, including compressed sensing \cite{donoho06b, candes06}, denoising \cite{elad06}, deblurring \cite{bronstein05}, and interpolation \cite{fadili09}.

The assumption of sparsity is an example of a much more general class of signal models which can be described as a union of subspaces \cite{lu08b, blumensath09, eldar09}. Indeed, each support pattern defines a subspace of the space of possible parameter vectors. Saying that the parameter contains no more than $k$ nonzero entries is equivalent to stating that $\x$ belongs to the union of all such subspaces. Unions of subspaces are proving to be a powerful generalization of the sparsity model. Apart from ordinary sparsity, unions of subspaces have been applied to estimate signals as diverse as pulse streams \cite{gedalyahu09, gedalyahu11}, multi-band communications \cite{mishali09, mishali09b, mishali10}, and block sparse vectors \cite{yuan06, eldar09, stojnic09, eldar10}, the latter being the focus of this paper. The common thread running through these applications is the ability to exploit the union of subspaces structure in order to achieve accurate reconstruction of signals from a very low number of measurements.

The block sparsity model is based on the realization that in many practical sparse representation settings, not all support patterns are equally likely. Specifically, if a particular element of $\x$ is nonzero, then in many cases ``similar'' elements in $\x$ are also nonzero. The precise definition of similarity is context-dependent. For example, in Fourier-based dictionaries, neighboring frequency bins are often jointly nonzero, while in wavelet-based dictionaries, nonzero entries in a certain detail level are likely to be correlated with nonzeros in higher detail levels. Consequently, the sparsity model does not incorporate all of the structure present in the signal. The block sparsity approach aims to partially overcome this drawback by partitioning the vector $\x$ into blocks, each of which contains a small number of elements. The structure imposed by the block sparsity model is that no more than a small number $k$ of blocks are nonzero. The model thus favors the use of related atoms, rather than sporadic dictionary columns. Consequently, block sparsity is well-suited for those situations described above, in which specific atoms tend to be used together.

The usefulness of a model depends on the existence of efficient and effective methods for estimating a signal $\x$ from its measurements. Fortunately, estimators designed for the ordinary sparsity model can be readily adapted to the block sparse setting. Thus, previous work has described techniques such as block orthogonal matching pursuit (BOMP) \cite{eldar10} and the mixed $\ell_2/\ell_1$-optimization (L-OPT) \cite{eldar09, stojnic09}, the latter being a block version of the Lasso. In this paper, we also describe a block-sparse version of the thresholding algorithm, which we refer to as block-thresholding (BTH). The BOMP and BTH approaches are representatives of a class of so-called greedy algorithms, which attempt to identify the support of $\x$ by choosing at each step the most likely candidate. In this paper we restrict attention to these greedy techniques, which are simpler (and more naive) than convex relaxation techniques such as L-OPT, and are therefore more suitable for implementation in large-scale or computationally parsimonious settings.

Having described various estimation algorithms, it is natural to ask what can be guaranteed analytically about the performance of these methods in practice. For example, in the ordinary (non-block) sparsity setting, a rich collection of performance guarantees exists for various algorithms under different noise models. In particular, a distinction is made between adversarial and random noise models. In the former case, nothing is known about $\w$ except that it is bounded, $\|\w\|_2 \le \eps$; in particular, $\w$ might be chosen so as to maximally harm a given estimation algorithm. Consequently, guarantees in this case are relatively weak, ensuring only that the error in $\x$ is on the order of $\eps$ \cite{donoho06, candes06, tropp06}. By contrast, when the noise is random, estimation performance is considerably improved for most noise realizations \cite{tropp06, candes07, ben-haim10}.

It is natural to seek an extension of these results to the block sparsity model. In the absence of noise, successful recovery of a block sparse parameter $\x$ from measurements $\y = \D\x$ has been demonstrated in the past for both BOMP and L-OPT \cite{eldar09, eldar10}. However, to the best of our knowledge, the only result providing analytical guarantees for a block sparse estimator under noise was given in \cite{eldar09}, where the performance of L-OPT was analyzed under adversarial noise. The goal of this paper is to analyze the performance of the greedy algorithms BOMP and BTH under both adversarial and random noise models. As we will see, despite the fact that these greedy algorithms are simpler and more efficient to implement, their performance is close to the optimal achievable results.

Specifically, we first analyze the adversarial noise model, and show that both BOMP and BTH achieve an error on the order of $\eps$ when the noise is bounded by $\|\w\|_2 \le \eps$. These results generalize previous guarantees in several ways: First, when each block contains one element, we recover the non-block sparsity guarantee of Donoho et al.\ \cite{donoho06}. Second, when the noise bound $\eps$ equals $0$, we obtain the noise-free guarantees of Eldar et al.\ \cite{eldar10}.

We next turn to the random noise model, and examine in particular the case in which $\w$ is white Gaussian noise. We derive the Cram\'er--Rao bound (CRB) for estimating $\x$ from its measurements, and show that this bound equals the error of the ``oracle estimator'' which knows the locations of the nonzero blocks of $\x$. However, while the oracle estimator relies on information which is unavailable in practice, the CRB is known to be achievable by the maximum likelihood (ML) technique at high SNR\@. Unfortunately, the ML approach is NP-complete, and thus can probably not be implemented efficiently. Nevertheless, we proceed to show that both BOMP and BTH come within a nearly constant factor of the CRB at high SNR, for dictionaries satisfying suitable requirements. Once again, when each block contains one element, we can recover previously known guarantees for non-block sparsity \cite{ben-haim10} from our results. Furthermore, we show that in typical block sparse situations, the performance guarantees of block algorithms is substantially better than that of non-block techniques.

The rest of this paper is organized as follows. The block sparse setting is defined in Section~\ref{se:setting}, and the BOMP and BTH techniques are described in Section~\ref{se:algorithms}. The adversarial noise model is then analyzed in Section~\ref{se:adv}. The treatment of random noise begins with the derivation of the CRB in Section~\ref{se:crb}, while performance guarantees for this case appear in Section~\ref{se:gauss}. Finally, the guarantees and the CRB are compared with the actual performance of BOMP and BTH in a numerical study in Section~\ref{se:numer}.

%%%%%%%%%%%%%%%%%%%%%%%%%%%%%%%%%%%%%%%%%%%%%%%%%%%%%%%%%%%%%%%%%%%%%%%%%%%%%%%%%%%%%%%%%%%%%%%%%%%%%%%%%%%%%%%%
\section{Problem Setting}
\label{se:setting}

\subsection{Notation}

The following notation is used throughout the paper. Matrices and vectors are denoted by boldface uppercase letters $\M$ and boldface lowercase letters $\bv$, respectively. The $\ell_2$ norm of a vector $\bv$ is $\|\bv\|_2$ and the spectral norm of a matrix $\M$ is $\|\M\|$. The expectation of a random vector $\bv$ will be denoted $\E{\bv}$ or, occasionally, $\Ex{\bv}$, where the subscript is intended to emphasize the fact that the expectation is a function of the deterministic quantity $\x$. The adjoint and the Moore--Penrose pseudoinverse of a matrix $\M$ are denoted, respectively, by $\M^*$ and $\M^\pinv$, while the column space of $\M$ is $\Ra{\M}$. We denote by $\bv[i]$ the $i$th $d$-element block of a vector $\bv$ of length $N = Md$. Thus
\beq
\bv[i] \triangleq [v_{(i-1)d + 1}, v_{(i-1)d + 2}, \ldots, v_{id} ]^T, \quad 1 \le i \le M.
\eeq
Consequently, we may write
\beq
\bv = \big[\bv^T[1], \ldots, \bv^T[M] \big]^T.
\eeq
Similarly, given a matrix $\M$ having $N$ columns, the submatrix $\M[i]$ contains the columns $(i-1)d+1, (i-1)d+2, \ldots, id$ of $\M$, i.e., those columns of $\M$ which correspond to the $i$th block. The support $\supp(\bv)$ of $\bv$ is defined as the set of indices of nonzero blocks of $\bv$; formally
\beq
\supp(\bv) \triangleq \{ i : \bv[i] \ne \zero \}.
\eeq
Given an index set $I$, the vector $\bv_I$ is constructed as the subvector of $\bv$ containing the blocks indexed by $I$; in other words, if $I = \{ i_1, \ldots, i_p \}$, then
\beq
\bv_I = \big[ \bv^T[i_1], \ldots, \bv^T[i_p] \big]^T.
\eeq
Likewise, the submatrix $\M_I$ contains the column blocks indexed by $I$, so that
\beq
\M_I = \big[ \M[i_1], \ldots, \M[i_p] \big].
\eeq
To uniquely define $\bv_I$ and $\M_I$, we will assume as a convention that the elements of $I$ are sorted, i.e., $i_1 < i_2 < \cdots < i_p$.

\subsection{Problem Definition}

Let $\x \in \CC^N$ be a deterministic block-sparse vector, i.e., $\x$ consists of $M$ blocks $\x[1],\ldots,\x[M]$ of size $d$, of which at most $k$ are nonzero \cite{eldar10}. The maximum support size $k$ is assumed to be known. The block sparsity restriction can then be written as
\beq \label{eq:def XX}
\x \in \XX \triangleq \{ \bv \in \RR^N : |\supp(\bv)| \le k \}.
\eeq
For convenience, let $S \triangleq \supp(\x)$ be the support of the parameter $\x$, and let $s = |S|$. Note the distinction between $k$ and $s$: It is known that at most $k$ blocks are nonzero, but the actual number of nonzero blocks $s$ is unknown and may be smaller than $k$. In the sequel, it will be useful to define
\begin{align}
\xmax &\triangleq \max_{i \in S} \|\x[i]\|_2, \notag\\%&
\xmin &\triangleq \min_{i \in S} \|\x[i]\|_2.
\end{align}

The block sparse model differs from the more common non-block sparsity setting: in the latter, it is assumed that a small number of entries (rather than blocks) in the vector $\x$ are nonzero. To emphasize this difference, we will occasionally refer to the non-block sparsity model as ``ordinary'' or ``scalar'' sparsity. 

We are given noisy observations
\beq \label{eq:y=Dx+w}
\y = \D\x + \w
\eeq
where $\D \in \CC^{L \times N}$ is a known, deterministic dictionary, and $\w$ is a noise vector. Our goal is to estimate $\x$ from the measurements $\y$. It will be convenient to denote the $i$th column (or ``atom'') of $\D$ as $\d_i$. Thus we have
\beq
\D = [ \underbrace{\d_1, \ldots, \d_d}_{\D[1]}, \underbrace{\d_{d+1}, \ldots, \d_{2d}}_{\D[2]}, \ldots, \underbrace{\d_{N-d+1}, \ldots, \d_N}_{\D[M]} ].
\eeq
We assume for simplicity that the dictionary atoms are normalized, $\|\d_i\|_2 = 1$. We also assume that the measurement system is underdetermined, i.e., the number of measurements $L$ is less than the number of parameters $N$; thus, we must utilize the structure $\XX$, for otherwise we have no hope of recovering $\x$ from its measurements. Finally, we require that for any index set $I$ of size $|I| \le k$, the subdictionary $\D_I$ has full column rank. This latter assumption is needed to ensure that after a support set is chosen, one may estimate $\x$ using standard techniques for inverting an overcomplete set of linear equations, e.g., the least-squares approach.

We will provide performance guarantees for two separate noise models. First, we consider the adversarial setting, in which the noise is unknown but bounded,
\beq \label{eq:adv noise}
\|\w\|_2 \le \eps
\eeq
for a known constant $\eps > 0$. In this case the goal is to provide performance guarantees which hold for all values of $\w$ satisfying \eqref{eq:adv noise}. Second, we treat additive white Gaussian noise, in which
\beq \label{eq:AWGN}
\w \sim N(\zero,\sigma^2 \I).
\eeq
In this case $\w$ is unbounded, and the goal will be to provide guarantees which hold with high probability.

Following \cite{eldar10}, we define the block coherence of $\D$ as
\beq \label{eq:def mu B}
\mu_B \triangleq \max_{i \neq j} \frac{1}{d} \| \D^*[i] \D[j] \|.
\eeq
We also define the sub-coherence
\beq
\nu = \max_{1 \le \ell \le M} \, \, \max_{(\ell-1)d+1 \le i \ne j \le \ell d} |\d_i^* \d_j|.
\eeq
The block coherence and sub-coherence are generalizations of the concept of the coherence, which is defined as
\beq \label{eq:def mu}
\mu = \max_{1 \le i \ne j \le N} |\d_i^* \d_j|
\eeq
and applies to dictionaries regardless of whether they have a block structure.

%%%%%%%%%%%%%%%%%%%%%%%%%%%%%%%%%%%%%%%%%%%%%%%%%%%%%%%%%%%%%%%%%%%%%%%%%%%%%%%%%%%%%%%%%%%%%%%%%%%%%%%%%%%%%%%%
\section{Techniques for Block-Sparse Estimation}
\label{se:algorithms}

For reference and in order to fix notation, we now describe the two greedy algorithms for which we provide performance guarantees.

\paragraph{Block-Thresholding (BTH)}
We propose the following straightforward extension of the well-known thresholding algorithm. Given a measurement vector $\y \in \CC^L$, perform the following steps:
\begin{enumerate}
\item Compute the correlations
\beq
\rho_i = \| \D^*[i] \y \|_2, \quad i=1,\ldots, M.
\eeq
\item Find the $k$ largest correlations and denote their indices by $i_1, \ldots, i_k$. In other words, find a set of indices $\hS = \{i_1,\ldots,i_k\}$ such that $\rho_i \ge \rho_j$ for all $i \in \hS$ and $j \notin \hS$.
\item The reconstructed signal is given by
\beq \label{eq:def xbth}
\xbth = \argmin_{\tx: \supp(\tx) = \hS} \| \y - \D \tx \|_2.
\eeq
\end{enumerate}

\paragraph{Block Orthogonal Matching Pursuit (BOMP)}
The BOMP algorithm, based on the OMP algorithm \cite{pati93}, was first proposed in \cite{eldar10}.

Given a measurement vector $\y \in \CC^L$, perform the following steps:
\begin{enumerate}
\item Define $\r^0 = \y$.
\item For each $\ell = 1, \ldots, k$, do the following:
\begin{enumerate}
\item Set
\beq \label{eq:bomp i ell}
i_\ell = \argmax_i \|\D^*[i] \r^{\ell-1}\|_2.
\eeq
\item Set
\beq
\x^\ell = \argmin_{\tx: \supp(\tx) \subseteq \{ i_1, \ldots, i_\ell \}} \| \y - \D\tx \|_2.
\eeq
\item Set $\r^\ell = \y - \D\x^\ell$.
\end{enumerate}
\item The estimate is given by $\xbomp = \x^k$.
\end{enumerate}

\paragraph{Oracle Estimator}
We will find it useful to analyze the oracle estimator, which is defined as the least-squares solution within the true support set, i.e.,
\beq \label{eq:def xor}
\xor = \argmin_{\tx: \supp(\tx) \subseteq S} \|\x - \tx\|_2^2.
\eeq
Using the notation introduced above, we have
\begin{align} \label{eq:xor}
(\xor)_S     &= (\D_S^* \D_S)^{-1} \D_S^* \y, \notag\\
(\xor)_{S^C} &= \zero
\end{align}
where $S^C = \{ 1, \ldots, M \} \backslash S$ is the complement of the support set $S$. Note that the term ``oracle estimator'' is somewhat misleading, since $\xor$ relies on knowledge of the true support set $S$, and is therefore not a true estimator.

%%%%%%%%%%%%%%%%%%%%%%%%%%%%%%%%%%%%%%%%%%%%%%%%%%%%%%%%%%%%%%%%%%%%%%%%%%%%%%%%%%%%%%%%%%%%%%%%%%%%%%%%%%%%%%%%
\section{Guarantees for Adversarial Noise}
\label{se:adv}

We begin by stating our performance guarantees in the case of adversarial noise. The proofs of these results are quite technical and can be found in Appendix~\ref{ap:prf adv}.

\begin{theorem} \label{th:bth adv}
Consider the setting of Section~\ref{se:setting} with adversarial noise \eqref{eq:adv noise}. Suppose that
\beq \label{eq:bth adv cond}
(1 - (d-1)\nu)\xmin > 2 \eps \sqrt{1 + (d-1)\nu} + (2k-1) d \mu_B \xmax.
\eeq
Then, the BTH algorithm correctly identifies all elements of the support of $\x$, and its error is bounded by
\beq \label{eq:bth adv}
\|\xbth - \x\|_2^2 \le \frac{\eps^2}{1 - (d-1)\nu - (k-1)d \mu_B}.
\eeq
\end{theorem}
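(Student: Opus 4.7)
I would split the argument along the two stages of BTH---support selection by thresholding the block correlations $\rho_i=\|\D^*[i]\y\|_2$, followed by a least-squares fit on the chosen support $\hS$---and handle them in that order.

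For the support stage, the goal is to show that $\rho_i>\rho_j$ for every $i\in S$ and every $j\notin S$, since this forces $S\subseteq\hS$ (note $|\hS|=k\ge s$). Substituting $\y=\D\x+\w$ and isolating the self-contribution $\D^*[i]\D[i]\x[i]$, I would apply the reverse triangle inequality to $\rho_i$ and then bound each piece separately. The within-block Gram $\D^*[i]\D[i]$ is Hermitian with unit diagonal and off-diagonal magnitudes at most $\nu$, so Gershgorin yields $\lambda_{\min}(\D^*[i]\D[i])\ge 1-(d-1)\nu$ and $\lambda_{\max}(\D^*[i]\D[i])\le 1+(d-1)\nu$. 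The first estimate gives $\|\D^*[i]\D[i]\x[i]\|_2\ge (1-(d-1)\nu)\xmin$; the cross terms $\D^*[i]\D[\ell]\x[\ell]$ with $\ell\in S\setminus\{i\}$ are each bounded by $d\mu_B\xmax$ by the definition of block coherence; and the noise contribution is bounded using $\|\D^*[i]\w\|_2\le\|\D^*[i]\|\,\eps\le\sqrt{1+(d-1)\nu}\,\eps$. The analogous upper bound for $j\notin S$ has no self-term, so combining the two and relaxing $s\le k$ reduces $\rho_i>\rho_j$ exactly to the hypothesis \eqref{eq:bth adv cond}.

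Once $S\subseteq\hS$, the vector $\x$ can be regarded as supported on $\hS$ with zero blocks on $\hS\setminus S$, so $\D\x=\D_{\hS}\x_{\hS}$ and the least-squares step gives $(\xbth)_{\hS}=\x_{\hS}+\D_{\hS}^\pinv\w$ (and zero elsewhere). Therefore
\[
\|\xbth-\x\|_2^2=\|\D_{\hS}^\pinv\w\|_2^2\le \frac{\|\w\|_2^2}{\lambda_{\min}(\D_{\hS}^*\D_{\hS})}\le \frac{\eps^2}{\lambda_{\min}(\D_{\hS}^*\D_{\hS})}.
\]
The stated bound \eqref{eq:bth adv} then follows from a block-Gershgorin estimate on the $|\hS|d\times|\hS|d$ Gram matrix $\D_{\hS}^*\D_{\hS}$: its diagonal blocks have smallest eigenvalue at least $1-(d-1)\nu$, while each off-diagonal block $\D^*[i]\D[j]$ has spectral norm at most $d\mu_B$, giving $\lambda_{\min}(\D_{\hS}^*\D_{\hS})\ge 1-(d-1)\nu-(k-1)d\mu_B$.

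The main obstacle I anticipate is the block-level Gershgorin bound on $\lambda_{\min}(\D_{\hS}^*\D_{\hS})$, which is the right tool here but requires an argument beyond ordinary scalar Gershgorin; it would probably be cleanest to isolate this as a short auxiliary lemma (or to cite its analogue in the block-sparse literature). Aside from that, the delicate part is the bookkeeping of the constants "$2k-1$" and "$k-1$": one must lower-bound a single self-term against the $s-1$ other support blocks in the threshold comparison, and against the $k-1$ other blocks in the eigenvalue bound, and only relax $s$ to $k$ at the last step so that the hypothesis \eqref{eq:bth adv cond} is recovered with no slack.
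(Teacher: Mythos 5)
Your proposal follows essentially the same route as the paper: the support-identification step is exactly the paper's Lemma~\ref{le:aleph} (reverse triangle inequality on the self-term, Gershgorin for the within-block Gram, block coherence for the cross terms, and $\|\D^*[i]\w\|_2\le\eps\sqrt{1+(d-1)\nu}$ for the noise), and the error step is the paper's least-squares bound $\|\D_{\hS}^\pinv\w\|_2^2\le\eps^2/\lambda_{\min}(\D_{\hS}^*\D_{\hS})$ combined with the block-Gershgorin estimate of Lemma~\ref{le:matrix norms}, which the paper indeed isolates as a lemma and proves via the Feingold--Varga generalization you anticipated needing. The only cosmetic difference is that the paper abstracts the noise bound into a parameter $\tau$ so the same support lemma can be reused in the Gaussian-noise proofs.
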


\begin{theorem} \label{th:bomp adv}
Consider the setting of Section~\ref{se:setting} with adversarial noise \eqref{eq:adv noise}. Suppose that
\beq \label{eq:bomp adv cond}
(1 - (d-1)\nu)\xmin > 2 \eps \sqrt{1 + (d-1)\nu} + (2k-1) d \mu_B \xmin.
\eeq
Then, the BOMP algorithm identifies all elements of $\supp(\x)$, and its error is bounded by
\beq \label{eq:bomp adv}
\|\xbomp - \x\|_2^2 \le \frac{\eps^2}{1 - (d-1)\nu - (k-1)d \mu_B}.
\eeq
\end{theorem}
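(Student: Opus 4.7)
The plan is to prove Theorem~\ref{th:bomp adv} in two phases: first, show by induction on the iteration index $\ell$ that every BOMP iteration selects a block from the true support $S$; second, bound the final least-squares error on the recovered set. Suppose that after $\ell - 1$ iterations the chosen set $J \triangleq \{i_1, \ldots, i_{\ell - 1}\}$ satisfies $J \subseteq S$, and write $T = S \setminus J$. The inductive step requires showing
\[
\max_{i \in T} \|\D^*[i] \r^{\ell - 1}\|_2 > \max_{j \notin S} \|\D^*[j] \r^{\ell - 1}\|_2,
\]
which forces $i_\ell \in T \subseteq S$. Since the least-squares step makes $\D_J^* \r^{\ell - 1} = \zero$, the residual admits the projected form $\r^{\ell - 1} = (\I - \D_J \D_J^\pinv)(\D_T \x_T + \w)$, which isolates the contribution of the blocks still to be recovered and cleanly eliminates the (otherwise hard to control) $\e^{\ell-1}[i]$ for $i \in J$.

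I would then bound both sides using three tools: the block-coherence inequality $\|\D^*[a]\D[b]\| \le d\mu_B$ for $a \ne b$; the sub-coherence Gershgorin estimates $\lambda_{\min}(\D^*[a]\D[a]) \ge 1 - (d-1)\nu$ and $\|\D^*[a]\| \le \sqrt{1 + (d-1)\nu}$; and a block Gershgorin bound that controls $\|(\D_J^* \D_J)^{-1}\|$, which appears through the projector. For $j \notin S$, expanding $\I - \D_J \D_J^\pinv$ and applying the triangle inequality gives an upper bound of the form $C_1 \max_{i \in T}\|\x[i]\|_2 + C_2 \eps$, where $C_1$ tallies both the direct cross terms $\D^*[j]\D[i]$ for $i \in T$ and the cross terms induced through $\D_J \D_J^\pinv$, and $C_2 = \sqrt{1 + (d-1)\nu}$. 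For the lower bound I would pick $i^* \in T$ maximizing $\|\x[i^*]\|_2$, so $\|\x[i^*]\|_2 \ge \xmin$, isolate the diagonal contribution $\D^*[i^*]\D[i^*]\x[i^*]$ of norm at least $(1 - (d-1)\nu)\|\x[i^*]\|_2$, and subtract the analogous cross and projection-induced terms and noise. Comparing the two sides and using $|T| + |J| \le k$ should reduce the inductive step to precisely the inequality \eqref{eq:bomp adv cond}.

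The main obstacle is the bookkeeping for the projector $\D_J \D_J^\pinv$: unlike BTH, the step-$\ell$ residual depends on all previously selected blocks, and the cross terms introduced by $\D_J (\D_J^* \D_J)^{-1}\D_J^*$ must be counted carefully to land on the coefficient $(2k-1)d\mu_B$. Once the induction is complete, BOMP terminates with $\hS \supseteq S$ and $|\hS| = k$, so $\xbomp - \x = \D_{\hS}^\pinv \w$ on $\hS$ and zero elsewhere. A block Gershgorin bound then gives $\lambda_{\min}(\D_{\hS}^* \D_{\hS}) \ge 1 - (d-1)\nu - (k-1)d\mu_B$, which combined with $\|\w\|_2 \le \eps$ yields \eqref{eq:bomp adv}.
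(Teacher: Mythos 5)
Your overall architecture (induction showing each iteration selects a block of $S$, followed by a least-squares error bound on $\hS \supseteq S$) matches the paper, and your final phase is correct and essentially identical to the paper's: $(\xbomp)_\hS = \D_\hS^\pinv \y$, so the error is $\D_\hS^\pinv \w$ on $\hS$, and $\|\D_\hS^\pinv\|^2 = \|(\D_\hS^*\D_\hS)^{-1}\| \le (1-(d-1)\nu-(k-1)d\mu_B)^{-1}$ by the block Gershgorin bound, giving \eqref{eq:bomp adv}. The gap is in the inductive step. You route the argument through the projected residual $\r^{\ell-1} = (\I-\D_J\D_J^\pinv)(\D_T\x_T+\w)$ and must then control cross terms of the form $\D^*[j]\D_J(\D_J^*\D_J)^{-1}\D_J^*\D_T\x_T$. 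These are second order in $d\mu_B$ (a sum over $J$ times a sum over $T$, divided by a Gershgorin lower bound on the spectrum of $\D_J^*\D_J$), and you give no computation showing they collapse into the single first-order coefficient $(2k-1)d\mu_B$ appearing in \eqref{eq:bomp adv cond}; you explicitly flag this bookkeeping as ``the main obstacle'' and leave it unresolved. As written, this route risks delivering a strictly stronger sufficient condition than the one claimed, so the proof is not complete at its crux.

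The idea you are missing is that the projector never needs to be touched. Because $\x^{\ell-1}$ is the least-squares solution supported on $J\subseteq S$, the residual can be written simply as $\r^{\ell-1} = \D\tx^{\ell-1}+\w$ with $\tx^{\ell-1} = \x-\x^{\ell-1}$ supported entirely inside $S$; moreover every block of $T=S\setminus J$ is untouched by $\x^{\ell-1}$, so $\max_j\|\tx^{\ell-1}[j]\|_2\ge\xmin$. One then applies to this \emph{modified} estimation problem (signal $\tx^{\ell-1}$, same noise $\w$) exactly the one-shot selection lemma already used for the first iteration --- the paper's Lemma~\ref{le:aleph}, whose hypothesis \eqref{eq:maxmax cond} reads $(1-(d-1)\nu)\,m > 2\tau + (2s'-1)d\mu_B\, m$ with the \emph{same} magnitude $m$ on both sides. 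With $\tau=\eps\sqrt{1+(d-1)\nu}$, $m\ge\xmin$, and $s'\le k$, this is implied verbatim by \eqref{eq:bomp adv cond}; this homogeneity is precisely why the BOMP condition carries $\xmin$ on both sides while the BTH condition needs $\xmax$ on the right. The conclusion \eqref{eq:maxmax res} places the selected block in $\supp(\tx^{\ell-1})\subseteq S$, and it cannot lie in $J$ because $\D_J^*\r^{\ell-1}=\zero$, which closes the induction with no projector bookkeeping at all.
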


The following remarks should be made concerning Theorems \ref{th:bth adv} and \ref{th:bomp adv}.

$\bullet$ \emph{Scalar sparsity:}
The scalar sparsity setting, in which $\x$ has no more than $k$ nonzero elements, can be recovered by choosing $d=1$. In this case, BOMP and BTH reduce to their scalar versions, which are called OMP and thresholding, respectively, and the block-coherence $\mu_B$ equals the coherence $\mu$ of \eqref{eq:def mu}. Theorems \ref{th:bth adv} and \ref{th:bomp adv} then coincide with the well-known results of Donoho et al.\ \cite{donoho06} for performance of scalar sparse signals under adversarial noise. As an example (and for future reference), the OMP performance guarantee is given below.

\begin{corollary}[Donoho et al.\ \cite{donoho06}]
Let $\y = \D\x + \w$ be a measurement vector of a signal $\x$ having sparsity $\|\x\|_0 \le k$. Suppose that the coherence $\mu$ of the dictionary $\D$ satisfies
\beq \label{eq:omp adv cond}
\xmin(1 - (2k-1)\mu) > 2\eps.
\eeq
Then, OMP recovers the correct support pattern of $\x$ and achieves an error bounded by
\beq \label{eq:omp adv}
\|\xomp - \x\|_2^2 \le \frac{\eps^2}{1 - (k-1)\mu}.
\eeq
\end{corollary}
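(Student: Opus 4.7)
The plan is to obtain this corollary as the immediate $d=1$ specialization of Theorem~\ref{th:bomp adv}. When each block contains a single element, a block-sparse vector with at most $k$ nonzero blocks is precisely a $k$-sparse vector in the ordinary sense, and the BOMP algorithm reduces identically to OMP: the block-selection rule \eqref{eq:bomp i ell} becomes the standard OMP correlation maximization $i_\ell = \argmax_i |\d_i^* \r^{\ell-1}|$, and the least-squares update in the subsequent step is unchanged. Hence $\xbomp$ coincides with $\xomp$ under this specialization, and it suffices to plug $d=1$ into both the hypothesis and the conclusion of Theorem~\ref{th:bomp adv}.

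For the relevant parameters, the sub-coherence $\nu$ is defined as a maximum over pairs of distinct indices lying inside a common block, so when $d=1$ the index set is empty and the term $(d-1)\nu$ must be interpreted as $0$. In particular $1-(d-1)\nu = 1$ and $\sqrt{1+(d-1)\nu}=1$. The block coherence \eqref{eq:def mu B} collapses to
\[
\mu_B = \max_{i \ne j} |\d_i^* \d_j| = \mu,
\]
since each block $\D[i]$ reduces to the single unit-norm column $\d_i$. Substituting these identifications into the hypothesis \eqref{eq:bomp adv cond} of Theorem~\ref{th:bomp adv} gives exactly $\xmin > 2\eps + (2k-1)\mu\,\xmin$, which rearranges to \eqref{eq:omp adv cond}, while substituting into the conclusion \eqref{eq:bomp adv} yields \eqref{eq:omp adv}.

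There is no real analytic obstacle here: the corollary is a direct specialization, and the only point requiring care is the convention that $(d-1)\nu$ vanishes at $d=1$ (consistent with the empty maximum defining $\nu$). I would therefore simply state the corollary as a consequence of Theorem~\ref{th:bomp adv}, verify the two substitutions above, and note explicitly that BOMP coincides with OMP in the scalar case so that the recovered support guarantee carries over verbatim to $\xomp$.
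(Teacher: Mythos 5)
Your proposal is correct and matches the paper's own treatment: the authors likewise obtain this corollary by setting $d=1$ in Theorem~\ref{th:bomp adv}, noting that BOMP reduces to OMP and that $\mu_B$ collapses to $\mu$ (with the $(d-1)\nu$ terms vanishing), so the condition and error bound specialize exactly to \eqref{eq:omp adv cond} and \eqref{eq:omp adv}. Your explicit check of the two substitutions and of the empty-maximum convention for $\nu$ is all that is needed.
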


Note that in the case of ordinary sparsity, $d=1$, and therefore $\xmin$ can be defined simply as the magnitude of the smallest nonzero element in $\x$.

$\bullet$ \emph{Benefits and limitations of block sparsity:}
It is interesting to compare the achievable performance guarantees when one utilizes the block-sparse structure, as opposed to merely using ordinary (scalar) sparsity information. For concreteness, we focus in this discussion on a comparison between OMP and BOMP, but identical conclusions can be drawn by comparing the thresholding algorithm with its block-sparse version BTH\@.
 
Consider a block sparse signal $\x$ as defined in Section~\ref{se:setting}. Such a signal can also be viewed as a scalar sparse signal of length $N = Md$, having no more than $sd$ nonzero elements. It is readily shown that the coherence $\mu$ satisfies $\nu \le \mu$ and $\mu_B \le \mu$ \cite{eldar10}. Consequently, 
\beq
\frac{\eps^2}{1 - (d-1)\nu - (k-1)d \mu_B} \le \frac{\eps^2}{1 - (sd-1)\mu}
\eeq
which implies that if the conditions for the performance guarantees of both BOMP and OMP hold, then the performance guarantee \eqref{eq:bomp adv} for BOMP will be at least as good as that of OMP \eqref{eq:omp adv}. Moreover, in typical block-sparse settings, both $\nu$ and $\mu_B$ will be substantially smaller than $\mu$ \cite{eldar10}, and the guarantees for BOMP will then be considerably better.

These results notwithstanding, it should be noted that BOMP should not automatically be preferred over OMP in any setting. This is because the condition \eqref{eq:bomp adv cond} of Theorem~\ref{th:bomp adv} can sometimes be weaker than that of OMP\@. Specifically, the factor $2 \eps \sqrt{1 + (d-1)\nu}$ in \eqref{eq:bomp adv cond} is larger than the analogous term $2 \eps$ in \eqref{eq:omp adv cond}.\footnote{The remaining terms in \eqref{eq:bomp adv cond} are always no worse than the corresponding terms in \eqref{eq:omp adv cond}.} This implies that if the sub-coherence $\nu$ is large, block sparse algorithms will not perform as well as their scalar counterparts. Such a result is to be expected: Highly correlated dictionary blocks may cause noise amplification, and in such cases, it may be preferable to separately correlate each atom with the measurements, rather than relying on the combined correlation of the entire block. Indeed, it would be quite surprising if a partition of \emph{any} dictionary $\D$ into arbitrary blocks could be shown to perform as well as a scalar sparsity algorithm, since the former adds a restriction on the possible support patterns of the vector $\x$. The lesson to be learned from this analysis is that block sparsity techniques are effective when the dictionary can be separated into blocks whose elements are orthogonal or nearly orthogonal.

$\bullet$ \emph{Noiseless case:}
The situation in which $\y = \D\x$, i.e., no noise is present in the system, has been previously analyzed in the context of block sparsity in \cite{eldar10}. This setting can be recovered by choosing the noise bound $\eps=0$. In this case, the condition \eqref{eq:bomp adv} simplifies to
\beq \label{eq:bomp noiseless cond}
(d-1)\nu + (2k-1)d \mu_B < 1
\eeq
and Theorem~\ref{th:bomp adv} then amounts to a guarantee for perfect recovery of $\x$ if \eqref{eq:bomp noiseless cond} holds. This result for the noise-free setting has been previously demonstrated in \cite[Thm.~3]{eldar10}.

Similarly, by substituting $\eps=0$ into Theorem~\ref{th:bth adv}, one obtains a perfect recovery condition for BTH in the noiseless setting. Specifically, if the condition
\beq \label{eq:bth noiseless cond}
(d-1) \nu \frac{\xmax}{\xmin} + (2k-1)d \mu_B < 1
\eeq
is satisfied, then BTH correctly recovers $\x$ from its noiseless measurements $\y = \D\x$.

Since BTH is a much simpler algorithm than BOMP, it is not surprising that the necessary condition \eqref{eq:bth noiseless cond} for BTH is somewhat stronger than the corresponding condition \eqref{eq:bomp noiseless cond} for BOMP\@. This difference between the conditions is indicative of the different strategies employed by the two techniques, and will be further discussed in Section~\ref{se:gauss}.

$\bullet$ \emph{Severity of the error:}
As in the scalar sparsity scenario, the presence of adversarial noise severely limits the ability of any algorithm to perform denoising. This is evident from Theorems \ref{th:bth adv} and \ref{th:bomp adv}, which guarantee only that the distance between the estimates and the true value of $\x$ is on the order of the noise magnitude $\eps$. Given our detailed knowledge of the structure of the signal $\x$, one would expect more powerful denoising capabilities for typical noise realizations. Consequently, in the remainder of this paper, we adopt the assumption of random noise, which cannot align itself so as to maximally interfere with the recovery algorithms.

%%%%%%%%%%%%%%%%%%%%%%%%%%%%%%%%%%%%%%%%%%%%%%%%%%%%%%%%%%%%%%%%%%%%%%%%%%%%%%%%%%%%%%%%%%%%%%%%%%%%%%%%%%%%%%%%
\section{The Cram\'er--Rao Bound}
\label{se:crb}

A central goal in assessing the quality of an estimator is to check its proximity to the best possible performance in the given setting. To this end, it is common practice to compute the CRB for unbiased estimators \cite{kay93}, i.e., those techniques $\hx$ for which the bias $\b(\x) \triangleq \Ex{\hx} - \x$ equals zero. The CRB is a lower bound on the mean-squared error $\MSE(\hx,\x) = \Ex{\|\hx-\x\|_2^2}$ for any unbiased estimator $\hx$.

To utilize the information inherent in the block sparsity structure, we apply the constrained CRB \cite{gorman90, StoicaNg98, ben-haim09, ben-haim09b} to the present setting. In the constrained estimation scenario, one often seeks estimators which are unbiased for all parameter values in the constraint set \cite{gorman90, StoicaNg98}. However, as we will see below, this requirement is too strict in the block sparse setting. Indeed, in Theorem~\ref{th:crb} we show that it is not possible to construct \emph{any} method which is unbiased for all feasible parameter values. Consequently, a weaker, local definition of unbiasedness is called for, which we refer to as $\XX$-unbiasedness \cite{ben-haim09b}.

Intuitively, an estimator $\hx$ is said to be $\XX$-unbiased at a point $\x \in \XX$ if $\Ex{\hx} = \x$ holds at the point $\x$ and at all points $\tx$ in $\XX$ which are sufficiently close to $\x$. To formally define $\XX$-unbiasedness, we first recall the concept of a feasible direction. A vector $\bv \in \CC^N$ is said to be a feasible direction at $\x$ if, for any sufficiently small $\alpha$, we have $\x + \alpha \bv \in \XX$. We then say that $\hx$ is $\XX$-unbiased at $\x$ if $\Ex{\hx} = \x$ and if
\beq \label{eq:bias req}
\left. \pd{\b(\x + \alpha \bv)}{\alpha} \right|_{\alpha=0} = 0
\eeq
for any feasible direction $\bv$. In other words, the bias is zero at $\x$ and remains unchanged, up to a first-order approximation, when moving away from $\x$ along feasible directions. This definition yields the following result, whose proof can be found in Appendix~\ref{ap:crb}.

\begin{theorem}[Cram\'er--Rao bound for block-sparse signals] \label{th:crb}
Consider the setting of Section~\ref{se:setting} in which the block sparse parameter vector $\x$ is to be estimated from measurements corrupted by Gaussian noise \eqref{eq:AWGN}.
\begin{enumerate}[(a)]
\item Suppose $\x$ contains fewer than $k$ nonzero blocks, i.e., $s<k$. Then, no finite-variance estimator is $\XX$-unbiased at $\x$.
\item Suppose $\x$ contains precisely $k$ nonzero blocks, i.e., $s=k$. Then, any estimator which is $\XX$-unbiased at $\x$ satisfies
\beq \label{eq:crb}
\MSE(\hx,\x) \ge \sigma^2 \Tr\left((\D_S^* \D_S)^{-1}\right).
\eeq
\end{enumerate}
\end{theorem}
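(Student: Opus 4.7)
The plan is to treat parts (a) and (b) separately by first characterizing the feasible directions at $\x$ in each case and then invoking the classical (unconstrained) CRB machinery for the Gaussian measurement model.

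For part (a), where $s<k$, I would begin by observing that for \emph{any} block index $j\in\{1,\ldots,M\}$, every vector $\bv$ supported on block $j$ alone is a feasible direction at $\x$: the support of $\x+\alpha\bv$ lies in $S\cup\{j\}$, whose cardinality is at most $s+1\le k$. Such single-block vectors already span $\CC^N$. The $\XX$-unbiasedness condition \eqref{eq:bias req}, combined with $\Ex{\hx}=\x$ at the base point, forces the Jacobian $\bPsi\triangleq\pd{\Ex{\hx}}{\x}$ to satisfy $\bPsi\bv=\bv$ for every feasible $\bv$; since the set $\{\bv:\bPsi\bv=\bv\}$ is a linear subspace, this relation extends from the union-of-subspaces set of feasible directions to its span, which is all of $\CC^N$, giving $\bPsi=\I_N$. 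On the other hand, assuming finite variance of $\hx$ so that differentiation under the integral of the Gaussian likelihood is justified, the standard score-function identity yields $\bPsi=\frac{1}{\sigma^2}\Ex{\hx(\y-\D\x)^*\D}$. Because $\D\in\CC^{L\times N}$ with $L<N$, choose any $\bv\ne\zero$ with $\D\bv=\zero$; then $\bPsi\bv=\zero$, contradicting $\bPsi\bv=\bv\ne\zero$.

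For part (b), where $s=k$, the feasible directions at $\x$ are exactly the vectors with $\supp(\bv)\subseteq S$: activating any new block $j\notin S$ would inflate the support of $\x+\alpha\bv$ to size $k+1$, while perturbations inside $S$ leave the support unchanged for small $\alpha$ since $\|\x[i]\|_2>0$ for $i\in S$. The $\XX$-unbiasedness condition therefore restricts to ordinary local unbiasedness of the subvector estimator $\hx_S$ viewed as an estimator of $\x_S\in\CC^{kd}$ in the reduced (unconstrained) model $\y=\D_S\x_S+\w$, with Jacobian $\pd{\Ex{\hx_S}}{\x_S}=\I_{kd}$ at the base point. The Fisher information matrix of this reduced problem is $\J_S=\D_S^*\D_S/\sigma^2$, which is invertible since $\D_S$ has full column rank by assumption. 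The classical CRB then yields $\Cov(\hx_S)\succeq\sigma^2(\D_S^*\D_S)^{-1}$, and because the bias vanishes at $\x$, $\MSE(\hx,\x)=\Tr(\Cov(\hx))\ge\Tr(\Cov(\hx_S))\ge\sigma^2\Tr((\D_S^*\D_S)^{-1})$, as claimed, where the first inequality merely drops the nonnegative variances of the components indexed outside $S$.

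The main obstacle is the careful handling of the union-of-subspaces structure of the feasible-direction set and the ``local'' nature of the $\XX$-unbiasedness condition. In particular, one must argue that although the feasible directions do not themselves form a subspace, the linear relation $\bPsi\bv=\bv$ derived on them extends by linearity to their span; and that the first-order derivative condition at the single point $\x$ is strong enough to invoke the classical locally-unbiased CRB in the reduced problem of part (b). The usual regularity questions (interchanging expectation and derivative, finiteness of the relevant moments) are standard for the Gaussian likelihood provided $\hx$ has finite variance, so they do not present serious difficulty but must be invoked explicitly to justify the score-function identity used in the contradiction for part (a).
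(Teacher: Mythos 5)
Your proposal is correct, and its skeleton --- characterize the feasible directions at $\x$, observe that for $s<k$ they span all of $\CC^N$ while for $s=k$ they span exactly the coordinates indexed by $S$, then split into the two cases --- is the same as the paper's. Where you diverge is in how the two conclusions are then extracted: the paper invokes the general constrained-CRB theorem of \cite{ben-haim09b}, whose existence condition $\Ra{\U(\x)\U^*(\x)} \subseteq \Ra{\U(\x)\U^*(\x)\J(\x)\U(\x)\U^*(\x)}$ fails when $\U(\x)=\I$ because $\rank(\D^*\D)\le L<N$ (giving part (a)), and whose bound $\Tr\!\left(\U(\U^*\J\U)^\pinv\U^*\right)$ evaluates to $\sigma^2\Tr\!\left((\D_S^*\D_S)^{-1}\right)$ (giving part (b)). You instead re-derive both facts from first principles: in part (a) you show the $\XX$-unbiasedness condition forces the Jacobian $\bPsi$ of $\x\mapsto\Ex{\hx}$ to equal $\I_N$ (via the eigenspace argument that $\{\bv:\bPsi\bv=\bv\}$ is a subspace containing a spanning set of feasible directions), while the Gaussian score-function identity forces $\bPsi$ to annihilate the nontrivial null space of $\D$ --- which is precisely the mechanism behind the failure of the paper's range condition; in part (b) you reduce to the unconstrained model $\y=\D_S\x_S+\w$ and apply the classical locally-unbiased CRB, which is exactly what the cited theorem computes once $\U^*(\x)\J(\x)\U(\x)=\D_S^*\D_S/\sigma^2$ is identified. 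Your route buys a self-contained proof that exposes why unbiasedness is impossible in the deficient-support case, at the cost of having to verify the differentiation-under-the-integral regularity yourself (which, as you note, is routine for the Gaussian likelihood with finite-variance estimators); the paper's route is shorter but opaque without the external reference. Both handle the union-of-subspaces subtlety identically, by passing from the nonconvex set of feasible directions to its linear span.
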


We recall that both the MSE and the CRB are functions of the unknown vector $\x$, as is generally the case when estimating a deterministic parameter. It follows immediately from Theorem~\ref{th:crb} that no finite-variance estimator can satisfy $\Ex{\hx} = \x$ for all $\x \in \XX$, which explains why we previously avoided this simpler definition of unbiasedness in the constrained setting. Instead, restricting attention to a local unbiasedness requirement led to a finite CRB for almost all parameter values in $\x$: specifically, those parameters whose support is maximal, $|\supp(\x)| \triangleq s = k$.

For maximal-support values of $\x$, it is not difficult to show that the CRB \eqref{eq:crb} coincides with the MSE of the oracle estimator \eqref{eq:xor}. In this case it is possible to get a sense for the value of the bound, as follows. From \eqref{eq:norm Ds Ds} of Lemma~\ref{le:matrix norms} (see Appendix~\ref{ap:prf adv}), we have that none of the eigenvalues of $(\D_S^* \D_S)^{-1}$ are larger than $1/(1 - (d-1)\nu - (k-1)d\mu_B)$. Thus
\beq
\sigma^2 \Tr\left((\D_S^* \D_S)^{-1}\right) \le \frac{1}{1 - (d-1)\nu - (k-1)d\mu_B} k d \sigma^2.
\eeq
In other words, when the block coherence and sub-coherence of $\D$ are low, the bound of Theorem~\ref{th:crb} will be close to $k d \sigma^2$. This value is typically much lower than the total noise variance $\E{\|\w\|_2^2} = L \sigma^2$. Thus, at least according to the CRB, it is possible to achieve substantial denoising in the presence of random noise. This stands in contrast to the rather disappointing guarantees presented for adversarial noise in the previous section. We may thus hope that the performance will be improved when considering random noise.

As opposed to the oracle estimator, which cannot be implemented in practice, it is well-known that the CRB can be asymptotically achieved at high SNR by the maximum likelihood (ML) estimator \cite{kay93}. However, in the present setting, computing the ML estimator is NP-hard, and thus impractical. Consequently, it is of interest to determine whether there exist \emph{efficient} techniques which come close to the performance bound \eqref{eq:crb}, at least for high SNR values. As we will show in the next section, this question is answered in the affirmative: greedy block sparsity techniques do indeed approach the CRB for sufficiently high SNR\@.

%%%%%%%%%%%%%%%%%%%%%%%%%%%%%%%%%%%%%%%%%%%%%%%%%%%%%%%%%%%%%%%%%%%%%%%%%%%%%%%%%%%%%%%%%%%%%%%%%%%%%%%%%%%%%%%%
\section{Guarantees for Gaussian Noise}
\label{se:gauss}

In this section, we analyze the performance of block sparse algorithms when the noise $\w$ is a Gaussian random variable having mean zero and covariance $\sigma^2 \I$. Our main performance guarantees are summarized in Theorems \ref{th:bth gauss} and \ref{th:bomp gauss}. The proofs of these theorems are found in Appendix~\ref{ap:prf gauss}.

\begin{theorem} \label{th:bth gauss}
Consider the setting of Section~\ref{se:setting} with additive white Gaussian noise $\w \sim N(\zero, \sigma^2 \I)$. Suppose it is known that
\begin{align} \label{eq:bth gauss cond}
&(1 - (d-1)\nu)\xmin - (2k-1)d \mu_B \xmax \notag\\
&\hspace{15mm} \ge 2 \sigma \sqrt{2 \alpha d (1 + (d-1)\nu) \log N }
\end{align}
for some constant $\alpha \ge 1/(2d\log N)$. Then, with probability exceeding
\beq \label{eq:bth gauss prob}
1 - \frac{0.8 d ( 2 \alpha d \log N )^{d/2-1} }{N^{\alpha d-1}}
\eeq
the BTH algorithm identifies the correct support of $\x$ and achieves an error bounded by
\beq \label{eq:bth gauss perf}
\|\xbth - \x\|_2^2 \le \frac{2 \alpha (1 + (d-1)\nu)}{(1 - (d-1)\nu - (k-1)d \mu_B)^2} d k \sigma^2 \log N.
\eeq
\end{theorem}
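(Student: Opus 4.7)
My plan is to split the proof into three standard steps: (i) a concentration bound that uniformly controls the block noise correlations $\|\D^*[i]\w\|_2$; (ii) a support-identification argument that reuses the correlation decomposition from the proof of Theorem~\ref{th:bth adv}; and (iii) a bound on the residual least-squares error on the recovered support.

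For the concentration step, I would observe that for each $i$ the vector $\D^*[i]\w$ is a zero-mean Gaussian with covariance $\sigma^2\D^*[i]\D[i]$, whose eigenvalues are all bounded above by $1+(d-1)\nu$ (Lemma~\ref{le:matrix norms}). Hence $\|\D^*[i]\w\|_2^2$ is stochastically dominated by $\sigma^2(1+(d-1)\nu)$ times a $\chi^2_d$ variable. I would apply the explicit right-tail estimate $\Pr{\chi^2_d\ge t}\le t^{d/2-1}e^{-t/2}/(2^{d/2-1}\Gamma(d/2))$ at $t = 2\alpha d\log N$, take a union bound over the $M \le N/d$ blocks, and use the side hypothesis $\alpha\ge 1/(2d\log N)$ to guarantee $t\ge d$ so that the monomial form of the tail bound is valid. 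A careful accounting of $\Gamma(d/2)^{-1}$ across $d \ge 1$ then produces the precise prefactor appearing in \eqref{eq:bth gauss prob}. Define the resulting good event
\[
E \triangleq \Bigl\{\max_{i}\|\D^*[i]\w\|_2 \le \eta\Bigr\},\qquad \eta \triangleq \sigma\sqrt{2\alpha d(1+(d-1)\nu)\log N}.
\]

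Working on $E$, I would prove support identification exactly as in the adversarial case. For $i \in S$, decomposing $\D^*[i]\y = \D^*[i]\D[i]\x[i] + \sum_{j\in S\setminus\{i\}}\D^*[i]\D[j]\x[j] + \D^*[i]\w$ and invoking $\lambda_{\min}(\D^*[i]\D[i]) \ge 1-(d-1)\nu$ and $\|\D^*[i]\D[j]\| \le d\mu_B$ for $j\ne i$ (both from Lemma~\ref{le:matrix norms}) yields $\rho_i \ge (1-(d-1)\nu)\xmin - (k-1)d\mu_B\,\xmax - \eta$. The symmetric estimate for $j \notin S$ gives $\rho_j \le k d\mu_B\,\xmax + \eta$. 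Hypothesis \eqref{eq:bth gauss cond} is then exactly what forces $\min_{i\in S}\rho_i > \max_{j\notin S}\rho_j$, so the top-$k$ step returns $\hS \supseteq S$. The least-squares step then produces $(\xbth)_{\hS} - \x_{\hS} = (\D_{\hS}^*\D_{\hS})^{-1}\D_{\hS}^*\w$, so
\[
\|\xbth - \x\|_2 \le \bigl\|(\D_{\hS}^*\D_{\hS})^{-1}\bigr\| \cdot \|\D_{\hS}^*\w\|_2.
\]
I would bound the first factor by $(1-(d-1)\nu-(k-1)d\mu_B)^{-1}$ (Lemma~\ref{le:matrix norms}, using $|\hS|=k$) and, on $E$, bound the second by $\|\D_{\hS}^*\w\|_2^2 = \sum_{i\in \hS}\|\D^*[i]\w\|_2^2 \le k\eta^2$; multiplying these estimates delivers \eqref{eq:bth gauss perf}.

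The main obstacle is the concentration step: extracting the precise constant $0.8\,d$ requires a sharp chi-squared tail bound rather than a generic sub-Gaussian estimate, and the side condition $\alpha \ge 1/(2d\log N)$ must be used to guarantee that the elementary monomial-times-exponential form of the tail bound remains valid uniformly across block sizes $d$. All remaining steps are essentially recyclings of the adversarial proof of Theorem~\ref{th:bth adv} with the deterministic bound $\eps$ replaced by the high-probability bound $\eta$.
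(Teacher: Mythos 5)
Your overall architecture coincides with the paper's: the support-identification step is exactly Lemma~\ref{le:aleph} applied with the deterministic noise level replaced by the high-probability level $\tau = \sigma\sqrt{2\alpha d(1+(d-1)\nu)\log N}$, and the final error bound via $\|(\D_\hS^*\D_\hS)^{-1}\|$ together with $\sum_{i\in\hS}\|\D^*[i]\w\|_2^2\le k\tau^2$ is the paper's \eqref{eq:prf th:bth gauss 1}--\eqref{eq:prf th:bth gauss 2}. Your use of a union bound over the $M$ blocks in place of the paper's \v{S}id\'ak-type product inequality (Lemma~\ref{le:sidak gen}) is immaterial, since the paper immediately weakens $(1-\delta)^M$ to $1-M\delta$ anyway and arrives at the same expression.

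The genuine problem is the chi-squared tail estimate on which your concentration step rests. The inequality $\Pr{\chi^2_d \ge t}\le t^{d/2-1}e^{-t/2}/\bigl(2^{d/2-1}\Gamma(d/2)\bigr)$ is false for $d\ge 3$: for even $d=2p$ the survival function equals $e^{-t/2}\sum_{j=0}^{p-1}(t/2)^j/j!$, and your expression is only its single leading term $j=p-1$. For instance, at $d=4$ the true probability is $(1+t/2)e^{-t/2}$ while your claimed bound is $(t/2)e^{-t/2}$. A valid upper bound must control the entire sum, and this is exactly where the factor $(d-2)!!\,\lceil d/2\rceil$ in the paper's Lemma~\ref{le:chi square} --- and hence the prefactor $0.8d$ in \eqref{eq:bth gauss prob} --- originates: for $t\ge 1$ each of the $\lceil d/2\rceil$ terms is dominated by $(d-2)!!\,t^{d}$ and one pays the count of terms. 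Relatedly, the hypothesis $\alpha\ge 1/(2d\log N)$ only gives $2\alpha d\log N\ge 1$, not $\ge d$ as you assert, so you cannot argue that the leading term dominates the sum term by term; the cruder $t\ge 1$ argument of the paper is what is actually available. As written, your step would "establish" a success probability strictly larger than \eqref{eq:bth gauss prob}, which signals that it is unsound rather than merely loose. Once the paper's Lemma~\ref{le:chi square} is substituted for your tail inequality, the remainder of your argument goes through and reproduces the paper's proof.
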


\begin{theorem} \label{th:bomp gauss}
Consider the setting of Section~\ref{se:setting} with additive white Gaussian noise $\w \sim N(\zero, \sigma^2 \I)$. Suppose it is known that
\begin{align} \label{eq:bomp gauss cond}
&(1 - (d-1)\nu)\xmin - (2k-1)d \mu_B \xmin \notag\\
&\hspace{15mm} \ge 2 \sigma \sqrt{2\alpha d (1 + (d-1)\nu) \log N }
\end{align}
for some constant $\alpha \ge 1/(2d\log N)$. Then, with probability exceeding \eqref{eq:bth gauss prob}, the BOMP algorithm identifies the correct support of $\x$ and achieves an error bounded by
\beq \label{eq:bomp gauss perf}
\|\xbomp - \x\|_2^2 \le \frac{2 \alpha (1 + (d-1)\nu)}{(1 - (d-1)\nu - (k-1)d \mu_B)^2} d k \sigma^2 \log N.
\eeq
\end{theorem}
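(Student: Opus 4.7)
The plan is to adapt the proof of Theorem~\ref{th:bomp adv} by replacing the worst-case noise bound $\|\w\|_2 \le \eps$ with a high-probability bound on the block-wise correlations $\{\|\D^*[i]\w\|_2\}_{i=1}^M$, since these are the only noise quantities that actually enter the iterative BOMP analysis. The structure of the argument is then (i) define a ``good event'' on which the noise correlations are controlled, (ii) lower-bound its probability, (iii) on that event, run the adversarial-style support-recovery induction, and (iv) bound the residual error via the oracle formula.

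First I would define the event
\beq
E = \bigl\{ \max_{1 \le i \le M} \|\D^*[i]\w\|_2 \le \sigma \sqrt{2\alpha d (1+(d-1)\nu) \log N} \bigr\}
\eeq
and show that $\Pr{E^c}$ is at most the expression in \eqref{eq:bth gauss prob}. For each $i$, $\D^*[i]\w$ is a $d$-dimensional Gaussian vector with covariance $\sigma^2 \D^*[i]\D[i]$, whose spectral norm is at most $1+(d-1)\nu$ by Gershgorin's disk theorem (diagonal entries equal $1$, off-diagonals bounded by $\nu$). Hence $\|\D^*[i]\w\|_2^2/\sigma^2$ is stochastically dominated by $(1+(d-1)\nu)\chi_d^2$, and a sharp upper-tail bound of the form $\Pr{\chi_d^2 > 2u} \le c\, u^{d/2-1} e^{-u}$ at $u = \alpha d \log N$, followed by a union bound over $i = 1,\ldots,M \le N$, yields the claimed probability. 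This is essentially identical to the corresponding step in the proof of Theorem~\ref{th:bth gauss}.

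Next, assuming $E$, I would run the adversarial iterative argument. By induction on $\ell$, the previously chosen indices satisfy $T_{\ell-1} \subseteq S$ and the residual equals
\beq
\r^{\ell-1} = (\I - \P_{T_{\ell-1}})\bigl(\D_{S \setminus T_{\ell-1}} \x_{S \setminus T_{\ell-1}} + \w\bigr),
\eeq
where $\P_{T_{\ell-1}}$ is the orthogonal projector onto $\Ra{\D_{T_{\ell-1}}}$. Invoking Lemma~\ref{le:matrix norms} to lower-bound $\max_{i \in S \setminus T_{\ell-1}} \|\D^*[i]\r^{\ell-1}\|_2$ and upper-bound $\max_{j \notin S} \|\D^*[j]\r^{\ell-1}\|_2$, and substituting the bound from $E$ for the noise contributions, the hypothesis \eqref{eq:bomp gauss cond}---which is precisely the adversarial condition \eqref{eq:bomp adv cond} with $\eps$ replaced by the noise level defining $E$---forces the former to strictly exceed the latter, so $i_\ell \in S$. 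The only subtlety is that the noise is $(\I-\P_{T_{\ell-1}})\w$ rather than $\w$; but $\P_{T_{\ell-1}}\w$ lies in $\Ra{\D_{T_{\ell-1}}}$, so $\D^*[i](\I-\P_{T_{\ell-1}})\w$ can be expanded using the coherence bounds of Lemma~\ref{le:matrix norms} into $\D^*[i]\w$ plus terms already absorbed into the block-coherence bookkeeping.

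Finally, once the support has been correctly identified, $\xbomp$ coincides with the oracle estimator \eqref{eq:xor}, so $(\xbomp - \x)_S = (\D_S^* \D_S)^{-1} \D_S^* \w$ and the complementary blocks vanish. Using $\|(\D_S^* \D_S)^{-1}\| \le (1 - (d-1)\nu - (k-1) d \mu_B)^{-1}$ from Lemma~\ref{le:matrix norms}, together with the event-$E$ bound
\beq
\|\D_S^* \w\|_2^2 = \sum_{i \in S} \|\D^*[i]\w\|_2^2 \le 2\alpha d k (1+(d-1)\nu)\sigma^2 \log N,
\eeq
immediately gives \eqref{eq:bomp gauss perf}. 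The main obstacle I anticipate is pinning down the chi-squared tail constant so that the resulting probability is exactly \eqref{eq:bth gauss prob}; the support-recovery induction itself is a routine transcription of the adversarial proof once the noise substitution is in place.
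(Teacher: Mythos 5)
Your overall strategy is the paper's: define the good event $B=\{\max_{1\le i\le M}\|\D^*[i]\w\|_2\le\tau\}$ with $\tau=\sigma\sqrt{2\alpha d(1+(d-1)\nu)\log N}$, bound its probability via the chi-squared tail estimate of Lemma~\ref{le:chi square} (the paper routes the $M$ blocks through the \v{S}id\'ak-type Lemma~\ref{le:sidak gen} followed by $(1-\eta)^M\ge 1-M\eta$, which lands on the same number as your union bound), then on $B$ rerun the adversarial induction of Theorem~\ref{th:bomp adv}, and finish with the oracle least-squares error estimate. All of that matches.

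The one place your plan would run into trouble is the induction step. You write the residual as $\r^{\ell-1}=(\I-\P_{T_{\ell-1}})(\D_{S\setminus T_{\ell-1}}\x_{S\setminus T_{\ell-1}}+\w)$ and propose to fold $\D^*[i]\P_{T_{\ell-1}}\w$ into the ``block-coherence bookkeeping.'' It is not already absorbed there: expanding $\P_{T_{\ell-1}}=\D_{T}(\D_{T}^*\D_{T})^{-1}\D_{T}^*$ produces an additional term of order $|T|\,d\mu_B\cdot\|(\D_{T}^*\D_{T})^{-1}\|\cdot\sqrt{|T|}\,\tau$ in the off-support correlations (with a matching loss on-support), so carrying your decomposition through literally yields a strictly stronger hypothesis than \eqref{eq:bomp gauss cond} and worse constants than \eqref{eq:bomp gauss perf}. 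The paper sidesteps the projector entirely: since $\x^{\ell-1}$ is supported on $T_{\ell-1}\subseteq S$, one has exactly $\r^{\ell-1}=\D\tx^{\ell-1}+\w$ with $\tx^{\ell-1}=\x-\x^{\ell-1}$ supported on $S$ and containing at least one untouched block of $\x$, hence $\max_j\|\tx^{\ell-1}[j]\|_2\ge\xmin$. Lemma~\ref{le:aleph} then applies verbatim to this modified problem with the original, unprojected noise $\w$, and condition \eqref{eq:maxmax cond} holds for it because its left side minus its right side is increasing in the signal's maximal block norm once $1-(d-1)\nu-(2k-1)d\mu_B>0$, which \eqref{eq:bomp gauss cond} guarantees. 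With that substitution in place of your projector identity, the rest of your argument closes as written.
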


We now provide some insights into the performance of block-sparse algorithms under random noise.

$\bullet$ \emph{Random noise vs.\ adversarial noise:}
As noted in Section~\ref{se:adv}, performance guarantees in the case of adversarial noise can ensure a recovery error on the order of the total noise magnitude. This is a result of the fact that the noise could, in principle, be concentrated in a single nonzero component of $\x$, whereupon it would be indistinguishable from the signal. However, for random noise, such an event is highly unlikely. Consequently, Theorems \ref{th:bth gauss} and \ref{th:bomp gauss} provide much tighter performance guarantees: both theorems demonstrate that, with high probability, the estimation error is on the order of $d k \sigma^2 \log N$, i.e., within a constant times $\log N$ of the CRB presented in Section~\ref{se:crb}. Since the noise variance $\E{\|\w\|^2}$ is given by $N \sigma^2$, and since typically $d k \log N \ll N$, we conclude that the block sparse algorithms have successfully removed a large portion of the noise, owing to the utilization of the union-of-subspaces structure.

$\bullet$ \emph{BOMP vs.\ BTH: }
Comparing Theorems \ref{th:bth gauss} and \ref{th:bomp gauss} leads to an important insight concerning the advantage of the more sophisticated BOMP algorithm over its simpler counterpart. Indeed, the guarantee for BOMP requires condition \eqref{eq:bomp gauss cond}, which basically states that $\xmin$ must be larger than a constant multiplied by the standard deviation of the noise. By contrast, for the BTH guarantee one requires the stronger condition \eqref{eq:bth gauss cond}, which can be interpreted as requiring $\xmin$ to be larger than a small constant times $\xmax$, plus another constant times the noise standard deviation.

To explain this difference, recall from Section~\ref{se:algorithms} that the BTH approach relies on a single support-identification stage in which the blocks most highly correlated with the measurements are chosen as the estimated support set $\hS$. Thus, for BTH to correctly identify the support, each block in $S$ must be sufficiently large in magnitude to overcome interference from the noise and from the remaining blocks. Condition \eqref{eq:bth gauss cond} can therefore be interpreted as a requirement that the magnitude $\xmin$ of the smallest nonzero block must be larger than the sum of the interference from the large nonzero blocks (the $\xmax$ term) and the noise. By contrast, the BOMP algorithm iteratively identifies support elements, maintaining a residual vector $\r^\ell$ containing the components of the measurement vector which have yet to be identified. Thus, BOMP requires only the ability to separately isolate each nonzero block, and hence its weaker condition \eqref{eq:bomp gauss cond}, which necessitates only that $\xmin$ be larger than the noise.

Finally, it should be noted that when BTH and BOMP both identify the correct support set, the estimates of the two algorithms coincide, explaining the identical bounds on their performance. The conclusion from this analysis is that BOMP should be preferred if a wide dynamic range of block magnitudes is possible, but that when all blocks have roughly the same size, the simpler and more efficient BTH technique can be used.

\begin{table*}
% these and a few other results are stored in 100726\results.xls.
% they were computed using 100726\multi_compare.m.
% see also notebook p.672.
\begin{center}
\begin{tabular}{rrrrrrrrrrr}
\toprule
\multicolumn{4}{c}{Problem Dimensions} & \multicolumn{2}{c}{Coherence} & \multicolumn{2}{c}{OMP} & \multicolumn{2}{c}{Block-OMP} & \multicolumn{1}{c}{Cram\'er--Rao} \cr
\cmidrule(r){1-4}
\cmidrule(r){5-6}
\cmidrule(r){7-8}
\cmidrule(r){9-10}
\cmidrule(r){11-11}
Blocks & Block size & Measurements & Sparsity \cr
$M$ & $d$ & $L$ & $k$ & $\mu$ & $\mu_B$ & Guarantee/$\sigma^2$  & $\sigma_{\max}$ & Guarantee/$\sigma^2$  & $\sigma_{\max}$ & CRB/$\sigma^2$\cr
\midrule
1200   & 5   & 3000 & 1   & 0.10  & 0.026   & 301.0  & 0.033   &  37.0  & 0.160 &  5.0 \\
1200   & 5   & 3000 & 2   & 0.10  & 0.026   & ---    & ---     &  98.8  & 0.110 & 10.0 \\
1200   & 5   & 3000 & 3   & 0.10  & 0.026   & ---    & ---     & 204.4  & 0.063 & 15.1 \\
1200   & 5   & 3000 & 4   & 0.10  & 0.026   & ---    & ---     & 417.0  & 0.010 & 20.1 \\
1200   & 5   & 3000 & 5   & 0.10  & 0.026   & ---    & ---     & ---    & ---   & 25.2 \\
\midrule
1200   & 5   & 3000 & 3   & 0.10  & 0.026   & ---    & ---     & 204.4  & 0.063 & 15.1 \\
 600   & 10  & 3000 & 3   & 0.10  & 0.015   & ---    & ---     & 364.3  & 0.049 & 30.2 \\
 300   & 20  & 3000 & 3   & 0.10  & 0.010   & ---    & ---     & 879.1  & 0.008 & 60.8 \\
 200   & 30  & 3000 & 3   & 0.10  & 0.007   & ---    & ---     & ---    & ---   & 91.8 \\
\midrule
1200   & 5   & 3000 & 1   & 0.10  & 0.026   & 301.0  & 0.033   &  37.0  & 0.160 &  5.0 \\
1200   & 5   & 1000 & 1   & 0.17  & 0.043   & ---    & ---     &  37.0  & 0.144 &  5.0 \\
1200   & 5   &  500 & 1   & 0.25  & 0.060   & ---    & ---     &  37.0  & 0.128 &  5.0 \\
1200   & 5   &  100 & 1   & 0.51  & 0.133   & ---    & ---     &  37.0  & 0.062 &  5.0 \\
1200   & 5   &   50 & 1   & 0.71  & 0.165   & ---    & ---     &  37.0  & 0.032 &  5.0 \\
1200   & 5   &   20 & 1   & 0.90  & 0.197   & ---    & ---     &  37.0  & 0.003 &  5.0 \\
1200   & 5   &   10 & 1   & 0.98  & 0.200   & ---    & ---     & ---    & ---   &  5.0 \\
\bottomrule
\end{tabular}
\end{center}
\caption{Performance Guarantees for OMP and Block-OMP}
\label{ta:1}
\end{table*}

$\bullet$ \emph{Scalar sparsity:} It is interesting to note that known results for scalar sparsity algorithms can be recovered from our block sparsity guarantees, by substituting $d=1$ into Theorems \ref{th:bth gauss} and \ref{th:bomp gauss}. For example, consider the BOMP guarantee (Theorem~\ref{th:bomp gauss}). In the scalar case, this algorithm is known as OMP, and its performance guarantee can be written as follows.

\begin{corollary} \label{co:omp gauss}
Let $\y = \D\x + \w$ be a measurement vector of a signal $\x$ having sparsity $\|\x\|_0 \le k$. Suppose the coherence $\mu$ of $\D$ satisfies
\beq
\xmin(1 - (2k-1)\mu) \ge 2 \sigma \sqrt{2 \alpha \log N}
\eeq
for some $\alpha>1$. Then, with probability exceeding
\beq\label{eq:omp gauss prob}
1 - \frac{0.8/\sqrt{2}}{N^{\alpha-1} \sqrt{\alpha \log N}}
\eeq
the OMP algorithm recovers the correct support of $\x$, and achieves an error bounded by
\beq
\|\xomp - \x\|_2^2 \le \frac{2\alpha}{(1 - (k-1)\mu)^2} k \sigma^2 \log N.
\eeq
\end{corollary}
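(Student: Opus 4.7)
The plan is to derive the corollary as a direct specialization of Theorem~\ref{th:bomp gauss} to the scalar-sparse case $d=1$. First I would observe that when $d=1$ the BOMP algorithm of Section~\ref{se:algorithms} reduces step-by-step to OMP: each ``block'' $\D[i]$ is just a single unit-norm atom $\d_i$, the inner maximization $\argmax_i \|\D^*[i] \r^{\ell-1}\|_2$ collapses to $\argmax_i |\d_i^* \r^{\ell-1}|$, and the least-squares re-projection step is unchanged. Hence the output $\x^k$ of BOMP coincides with the OMP estimate $\xomp$.

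Next, I would reconcile the coherence quantities. By definition \eqref{eq:def mu B}, with $d=1$ the block coherence $\mu_B = \max_{i \ne j} |\d_i^* \d_j|$ equals the ordinary coherence $\mu$ of \eqref{eq:def mu}. The sub-coherence $\nu$ is a maximum over pairs $(i,j)$ with $i \ne j$ inside a single block of size $d=1$; this index set is empty, so every factor of the form $(d-1)\nu$ in Theorem~\ref{th:bomp gauss} vanishes (it multiplies a $d-1 = 0$ term and so causes no trouble even if $\nu$ is taken to be $0$ by convention). Likewise the factor $d$ multiplying $(k-1)\mu_B$ or $(2k-1)\mu_B$ becomes $1$.

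With these reductions, condition \eqref{eq:bomp gauss cond} becomes $\xmin - (2k-1)\mu \xmin \ge 2\sigma\sqrt{2\alpha \log N}$, which is exactly the hypothesis of the corollary. The error bound \eqref{eq:bomp gauss perf} becomes $\frac{2\alpha}{(1-(k-1)\mu)^2} k \sigma^2 \log N$, matching the corollary's conclusion. Finally, specializing the probability bound \eqref{eq:bth gauss prob} gives
\beq
1 - \frac{0.8 \cdot 1 \cdot (2\alpha \log N)^{1/2 - 1}}{N^{\alpha-1}} = 1 - \frac{0.8/\sqrt{2}}{N^{\alpha-1}\sqrt{\alpha \log N}},
\eeq
which is precisely \eqref{eq:omp gauss prob}.

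There is no real obstacle here: the result is a verbatim substitution into Theorem~\ref{th:bomp gauss}. The only point requiring minor care is the small-$\alpha$ regime, since Theorem~\ref{th:bomp gauss} is stated for $\alpha \ge 1/(2d\log N)$ while the corollary states $\alpha > 1$. The corollary's stronger assumption is needed only to make the probability bound \eqref{eq:omp gauss prob} nontrivial (i.e., strictly positive and bounded away from $0$), and is implied by the looser condition of Theorem~\ref{th:bomp gauss}. Thus the corollary follows immediately, and no new analysis is required beyond identifying $d=1$, $\mu_B=\mu$, and $\nu$-free terms.
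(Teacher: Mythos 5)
Your proposal is correct and matches the paper exactly: the paper obtains Corollary~\ref{co:omp gauss} precisely by substituting $d=1$ into Theorem~\ref{th:bomp gauss}, identifying BOMP with OMP and $\mu_B$ with $\mu$, and noting that all $(d-1)\nu$ terms vanish, which yields the stated condition, error bound, and the probability $1 - \frac{0.8/\sqrt{2}}{N^{\alpha-1}\sqrt{\alpha\log N}}$ verbatim. Your remark on reconciling $\alpha>1$ with the theorem's weaker requirement $\alpha \ge 1/(2d\log N)$ is a correct and harmless addition.
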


Corollary~\ref{co:omp gauss} is nearly identical to \cite[Thm.~4]{ben-haim10}, with the only difference being that the constant $0.8/\sqrt{2} \approx 0.566$ in \eqref{eq:omp gauss prob} is replaced in \cite{ben-haim10} with the slightly better constant $1/\sqrt{\pi} \approx 0.564$. This slight discrepancy can be resolved if the more accurate version \eqref{eq:le:chi square gamma} of Lemma~\ref{le:chi square} is used in the proof of Theorem~\ref{th:bomp gauss}, but the resulting expression becomes much more cumbersome in the block sparse case.

$\bullet$ \emph{Block sparsity vs.\ scalar sparsity:} A legitimate question is whether the incorporation of the block sparsity structure substantially assists estimation algorithms. In other words, do the performance guarantees of the block algorithms BOMP and BTH compare favorably with the results achievable on identical signals using scalar sparsity algorithms, such as OMP and thresholding? This question is examined numerically in the next section.

%%%%%%%%%%%%%%%%%%%%%%%%%%%%%%%%%%%%%%%%%%%%%%%%%%%%%%%%%%%%%%%%%%%%%%%%%%%%%%%%%%%%%%%%%%%%%%%%%%%%%%%%%%%%%%%%
\section{Numerical Experiments}
\label{se:numer}

\begin{figure*}
% To calculate the results: 100802\calculate.m (r820)
% Results saved in results2.mat
% To plot the results: 100802\plotme (r820). Make sure that results2.mat is loaded.
% Each figure should be resized to 8x6 cm and then saved as EPS.
% In Matlab, Fig.1=OMP, Fig.2=BTH, Fig.3=OMP, Fig.4=Thr.
\centerline{%
\subfigure[Block-OMP]{%
\includegraphics{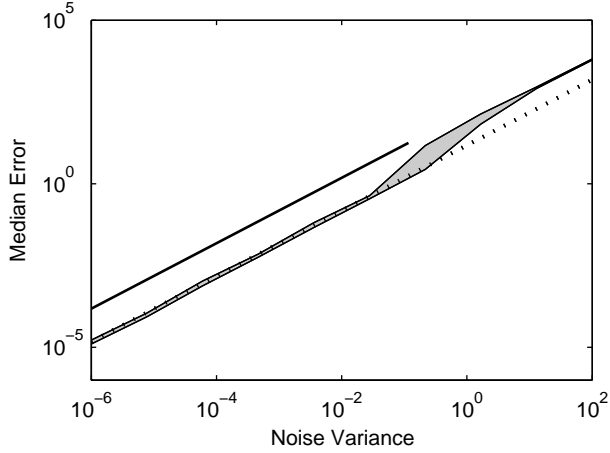}%
\label{fi:1bomp}}
\hfil
\subfigure[Block-Thresholding]{%
\includegraphics{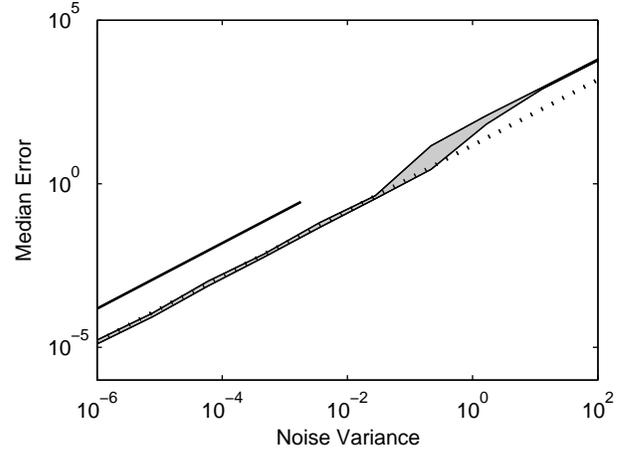}%
\label{fi:1bth}}%
}
\centerline{%
\subfigure[OMP]{%
\includegraphics{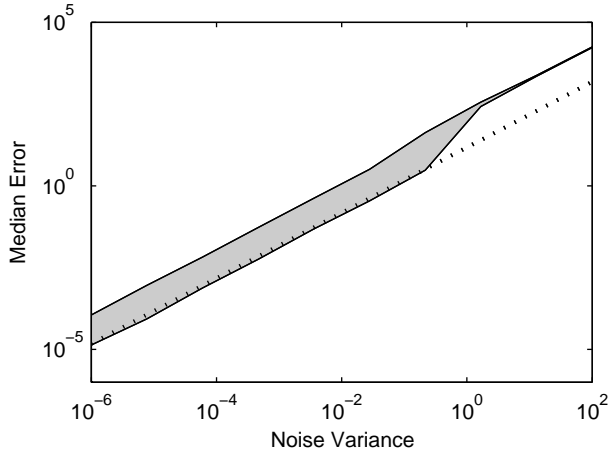}%
\label{fi:1omp}}
\hfil
\subfigure[Thresholding]{%
\includegraphics{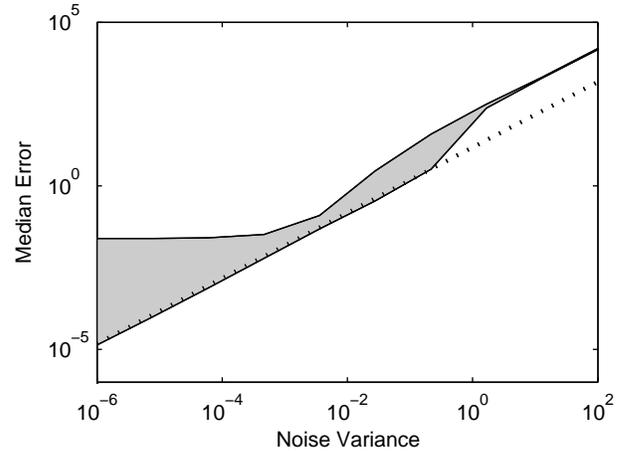}%
\label{fi:1thr}}%
}
\caption{Median squared error as a function of the noise variance for block and scalar sparse estimation algorithms. The shaded region indicates the range of errors encountered for different parameter values. The dotted line plots the CRB. The thick solid line in Figs. \ref{fi:1bomp} and \ref{fi:1bth} indicates the performance guarantees for the block sparse algorithms; no guarantee can be made for the scalar sparsity techniques in Figs. \ref{fi:1omp} and \ref{fi:1thr}.}
\label{fi:1}
\end{figure*}

From a practical point of view, it is important to determine whether the use of block sparse algorithms contributes significantly to the performance of estimation algorithms. After all, any block sparse signal containing $k$ nonzero blocks of size $d$ can also be viewed as a sparse signal containing $k d$ nonzero elements. Is there a significant benefit in using the block algorithms rather than the ordinary scalar versions?

There are two possible approaches to answering this question. First, one may compare the performance achieved in practice by block sparse and scalar sparse algorithms. This requires a complete specification of the problem setting, including a choice of the parameter value $\x$, which is unknown in practice. Alternatively, one can compare the performance guarantees for block sparse techniques, which were derived in Section~\ref{se:gauss}, to the previously known guarantees for scalar approaches \cite{ben-haim10b}. The performance guarantees apply to all parameter values having a specified sparsity level, and are therefore more general. However, there may be a gap between the guarantee and the performance observed in practice. In order to take advantage of both approaches, in the following we compare both the actual performance and the guarantees of the various algorithms discussed in this paper.

In our experiments, we used dictionaries containing orthonormal blocks. Such dictionaries were constructed by first generating a random $L \times N$ matrix containing IID, zero-mean Gaussian random variables, and then performing a Gram--Schmidt procedure separately on the columns of each block. As a first experiment, we generated a variety of such dictionaries, and computed their coherence $\mu$ and block coherence $\mu_B$. (The sub-coherence of dictionaries generated in this manner is necessarily $\nu=0$.) These values were used to compute performance guarantees for BOMP (using Theorem~\ref{th:bomp gauss}) and for OMP (using Corollary~\ref{co:omp gauss}). We assumed throughout that the minimum norm $\xmin$ among nonzero blocks equals $1$ and that the minimum nonzero element equals $1/\sqrt{d}$. Some typical results are listed in Table~\ref{ta:1}. To compute the guarantees in this table, the smallest value of $\alpha$ yielding a 99\% probability of success was chosen. The resulting guarantee is listed in multiples of $\sigma^2$. For example, a value of $\text{Guarantee}/\sigma^2 = 100$ means that $\|\hx-\x\|_2^2 \le 100\sigma^2$ for 99\% of the noise realizations. Also listed in Table~\ref{ta:1} are the maximum noise standard deviations $\sigma_{\max}$ for which the performance guarantees still hold. A dash (---) indicates that no guarantee can be made for the given setting even in the noise-free case.

It is evident from Table~\ref{ta:1} that the block sparse algorithm BOMP is guaranteed to perform over a much wider range of problem settings than the scalar OMP approach. Furthermore, even when performance guarantees are provided for both techniques, those for BOMP are substantially stronger. To provide merely one striking example from Table~\ref{ta:1}, note that $50$ measurements suffice for BOMP to identify a signal composed of a single $5$-element block among a set of $1200$ possible blocks, whereas for OMP to identify such a signal at the same noise level, as many as $3000$ measurements are required. The reason for this advantage is clear: the OMP algorithm must separately identify each nonzero component of the signal, and must therefore choose among a total of $\binom{1200}{5} \approx 2.1 \cdot 10^{13}$ possible support sets. This is obviously more challenging than identifying one nonzero block among a set of $1200$ possibilities. Clearly, then, knowledge of a block-sparse structure can substantially improve performance if it is correctly utilized.

Table~\ref{ta:1} also compares the performance guarantees with the CRB of Theorem~\ref{th:crb}. The CRB is listed for a random choice of support set $S$ containing precisely $k$ nonzero blocks; however, choosing different sets $S$ only has a small effect on the value of the bound. The gap between these lower and upper bounds is not inconsiderable, and is typically on the order of a factor of 10\@. There are several reasons for this gap. First, the performance guarantees plotted above indicate an error which is obtained with 99\% confidence, whereas the CRB is a bound on the MSE\@. By its very nature, the MSE averages out unusually disruptive noise realizations, and thus tends to be more optimistic. Second, different values of $\x$ may yield significantly different performance; the performance guarantees apply to \emph{all} values of $\x$, whereas the CRB is plotted for a single, typical parameter value. Third, some loss of tightness undoubtedly results from the derivations of the theorems, i.e., there may still be room for improved bounds.

To measure the relative influence of these factors, we performed another experiment, in which the guarantees were compared with the actual performance of the various algorithms. To overcome the aforementioned pessimistic effect of a guarantee which holds with overwhelming probability, in this second experiment we computed guarantees with a 50\% confidence level. In other words, these are assurances on the median of the distance between $\x$ and its estimate, which captures the typical estimation error. We also computed the actual median error of the various algorithms for a variety of parameter values.

\begin{figure*}
% To calculate the results: 100802\calculate.m (r818)
% Results saved in results.mat (r818)
% To plot the results: 100802\plotme (r820). Make sure that results.mat is loaded.
% Each figure should be resized to 8x6 cm and then saved as EPS.
% In Matlab, Fig.1=OMP, Fig.2=BTH, Fig.3=OMP (not used here), Fig.4=Thr (not used here).
\centerline{%
\subfigure[Block-OMP]{%
\includegraphics{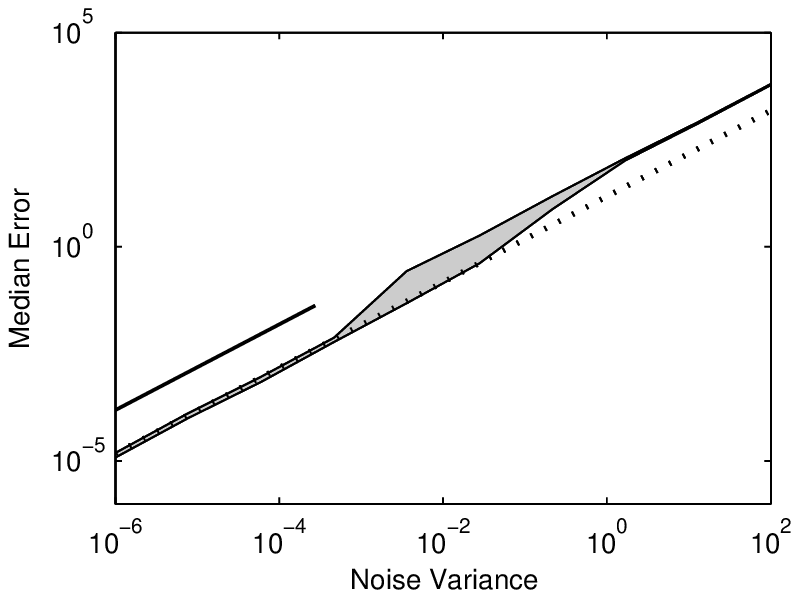}%
\label{fi:2bomp}}
\hfil
\subfigure[Block-Thresholding]{%
\includegraphics{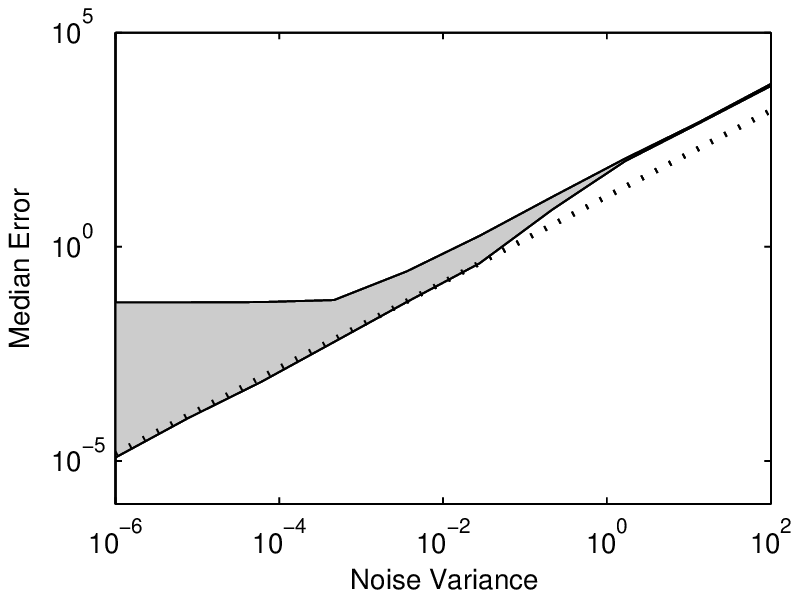}%
\label{fi:2bth}}%
}
\caption{Median squared error as a function of the noise variance for block sparse estimation algorithms. The shaded region indicates the range of errors encountered for different parameter values. The dotted line plots the CRB. The thick solid line in Fig. \ref{fi:2bomp} indicates the performance guarantee for BOMP; no guarantee can be made for BTH. The deteriorated performance of BTH is a result of the existence of low-magnitude blocks.}
\label{fi:2}
\end{figure*}

The details of this experiment are as follows. We constructed a $3000 \times 6000$ dictionary $\D$ containing $M=1200$ blocks of $d=5$ atoms each, using the orthogonalization algorithm described above. The resulting coherence of $\D$ was $\mu = 0.094$, the block coherence was $\mu_B = 0.026$, and since each block was orthonormal, the sub-coherence was $\nu = 0$. We then constructed a variety of block sparse vectors $\x$, each having $s=3$ nonzero blocks, with $\xmin = 2\sqrt{d}$ and $\xmax = 3\sqrt{d}$. We chose the parameter vectors so as to cover as wide a range of scenarios as possible, within the aforementioned requirements. For example, some parameter vectors contained a block with a single nonzero component whose value was $\xmax$, while other vectors contained a block with each of the $d$ elements receiving a value of $\xmax/\sqrt{d}$. Although it is clearly not feasible to cover the full range of possible parameter vectors, it is hoped that in this way some sense is given of the variability in performance for different parameter values. Indeed, as shown below, different parameters often yield widely differing estimation errors.

For each choice of a parameter vector, $20$ noise realizations were generated and the resulting measurement vector $\y$ was computed using \eqref{eq:y=Dx+w}. The BOMP, BTH, OMP, and thresholding algorithms were then applied to each of the measurement vectors. For every technique and each parameter vector, the median estimation error (among the noise realizations) was computed. The range of median estimation errors obtained for different choices of $\x$ is plotted as a shaded area in Fig.~\ref{fi:1}.

In the present setting, neither of the scalar sparsity algorithms was capable of providing a performance guarantee. For BOMP and BTH, performance guarantees were available, and these are plotted as a solid line in Fig.~\ref{fi:1}. These guarantees are valid only up to a certain maximal noise variance, at which point the solid line in Fig.~\ref{fi:1} stops. The results are also compared with the CRB of Theorem~\ref{th:crb}. It should be emphasized that the CRB is a bound on the MSE, rather than the median error, although in practice the differences between these two quantities appear to be quite small. It is also worth recalling that the CRB is a bound on unbiased estimators, while all of the techniques discussed herein are biased; nevertheless, it is evident that the CRB still provides a rough measure of the optimal performance of the proposed algorithms.

Several comments are in order concerning Fig.~\ref{fi:1}. First, the performance of both block sparse algorithms exhibits a transition: near-CRB performance for low noise levels deteriorates substantially when the noise level crosses a certain threshold. This behavior qualitatively matches the predictions of the performance guarantees, which ensure support recovery and near-CRB performance for sufficiently low noise levels. The threshold at which this transition occurs is identified fairly accurately for BOMP, and less so for BTH, although it is possible that there exist some (untested) parameter values for which the BTH transition occurs at lower noise levels. However, the numeric value of the performance guarantee is somewhat pessimistic: while the observed performance is close to the CRB for all parameter values, analytically one can guarantee only that the median error will not be larger than approximately 10 times the CRB\@. This result is most likely due to the various inequalities employed in the proofs of Theorems \ref{th:bth gauss} and \ref{th:bomp gauss}. Indeed, since the correct support is identified with high probability for most noise realizations, the BTH and BOMP algorithms will likely tend to coincide with the oracle estimator, whose error equals that of the CRB\@. The question of formally proving such a claim remains a topic for further research.

The advantages of the block sparse approach become evident when compared with scalar sparsity algorithms (Figs. \ref{fi:1omp} and \ref{fi:1thr}). For the scalar techniques, no performance guarantees can be made in the present setting. Unlike the block sparsity algorithms, the scalar approaches fail to recover the correct parameter vector even when the noise is negligible, and for some parameter values, their error does not converge to the CRB\@. The thresholding algorithm, in particular, ceases to improve (for some parameter values) as the noise is reduced, while the OMP approach, although significantly better than thresholding, does not converge to the CRB as do the block sparse techniques. This demonstrates the advantages of utilizing the fact that the signal is known to have a block-sparse structure.

The performance of BOMP (Fig.~\ref{fi:1bomp}) is quite similar to that of BTH (Fig.~\ref{fi:1bth}) in the experiment above. This is not surprising when one compares our problem setting with the guarantees of Section~\ref{se:gauss}. Indeed, as we have seen, the primary difference between the BOMP and BTH algorithms is that the one-shot support estimation employed by BTH causes large-magnitude blocks to overshadow small-magnitude nonzero blocks. In the setting of Fig.~\ref{fi:1}, the range of magnitudes between $\xmax = 3\sqrt{d}$ and $\xmin = 2\sqrt{d}$ is not very large, and therefore BTH performs nearly as well as BOMP\@. The advantages of BOMP become readily apparent if one considers a wider dynamic range. This is illustrated in Fig.~\ref{fi:2}, in which the setup is identical to that of the previous experiment, except that parameter vectors having $\xmin = 0.1 \sqrt{d}$ and $\xmax = \sqrt{d}$ were chosen, yielding a 10-fold dynamic range in the block magnitudes. In this case, while the guarantee for BOMP is hardly changed, the conditions for Theorem~\ref{th:bth gauss} no longer hold, so that nothing can be ensured concerning the BTH technique. Indeed, in Fig.~\ref{fi:2} we see that BTH performs poorly for some parameter values even when the noise level is low, and its performance is no longer proportional to the CRB\@.

%%%%%%%%%%%%%%%%%%%%%%%%%%%%%%%%%%%%%%%%%%%%%%%%%%%%%%%%%%%%%%%%%%%%%%%%%%%%%%%%%%%%%%%%%%%%%%%%%%%%%%%%%%%%%%%%
\section{Conclusion}

In this paper, we analyzed the performance of the greedy block algorithms BOMP and BTH under the adversarial and Gaussian noise models. In the adversarial setting $\|\w\|_2 \le \eps$, we showed that the estimation error equals a constant times the noise bound $\eps$, which shows that performance in this case will not necessarily reduce the noise power. The situation is much better in the presence of random noise, where we saw that, under suitable conditions, greedy techniques obtain an error on the order of $d k \sigma^2 \log N$ with high probability; this is substantially lower than the input noise power $N \sigma^2$. Indeed, the BTH and BOMP algorithms come close to the CRB and the error of the oracle estimator.

There remain many open questions concerning the performance of block sparse techniques under random noise. For example, for scalar sparsity, performance guarantees for convex relaxation techniques do not require assumptions on the SNR. An important challenge is to determine whether similar SNR-independent results can be demonstrated for block convex relaxation techniques such as L-OPT. Furthermore, it is well-known that scalar sparsity guarantees can be strengthened if the restricted isometry constants of the dictionary $\D$ are known, as is the case, for example, when $\D$ is chosen from an appropriate random ensemble. Thus, it is also of interest to provide guarantees for block techniques under random noise based on an extension of the RIP to the block sparse setting. One such extension has already been proposed in \cite{eldar09}, and its application to the Gaussian noise model may provide tighter bounds for some performance algorithms.

%%%%%%%%%%%%%%%%%%%%%%%%%%%%%%%%%%%%%%%%%%%%%%%%%%%%%%%%%%%%%%%%%%%%%%%%%%%%%%%%%%%%%%%%%%%%%%%%%%%%%%%%%%%%%%%%
\appendices
\section{Proofs for Adversarial Noise}
\label{ap:prf adv}

We begin by providing several lemmas which will prove useful for the analysis under both the adversarial and the Gaussian noise models.

\begin{lemma} \label{le:matrix norms}
Given a dictionary $\D$ having block coherence $\mu_B$ and sub-coherence $\nu$, we have
\beq
\|\D^*[i]\D[j]\|               \le d \mu_B \quad \text{for all $i \ne j$} \label{eq:norm Di Dj}
\eeq
and
\beq
\|\D[i]\|^2 = \|\D^*[i]\D[i]\| \le 1 + (d-1)\nu .                         \label{eq:norm Di Di}
\eeq
If $1 - (d-1)\nu > 0$, then
\beq
\|(\D^*[i]\D[i])^{-1}\|        \le \frac{1}{1 - (d-1)\nu}.                \label{eq:norm inv Di Di}
\eeq
Suppose $1 - (d-1)\nu - (k-1)d\mu_B > 0$ and let $I$ be an index set with $|I| \le k$. Then
\beq
\|(\D_I^* \D_I)^{-1}\|         \le \frac{1}{1 - (d-1)\nu - (k-1)d\mu_B}.  \label{eq:norm Ds Ds}
\eeq
\end{lemma}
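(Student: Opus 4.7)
The plan is to handle the four bounds in order, since each builds on the previous, and to concentrate effort on the fourth, which is the only one requiring a genuine block-level argument. The first bound is essentially a restatement of the definition of $\mu_B$ in \eqref{eq:def mu B}: from $\mu_B = \max_{i \ne j} (1/d) \|\D^*[i]\D[j]\|$ we immediately get $\|\D^*[i]\D[j]\| \le d\mu_B$.

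For the second and third bounds, I would look at the Gram matrix $\D^*[i]\D[i]$ of the $i$th block. Its diagonal entries are $\|\d_j\|_2^2 = 1$ (since atoms are unit-norm) and its off-diagonal entries have modulus at most $\nu$, by definition of the sub-coherence. Writing $\D^*[i]\D[i] = \I + \E$ where $\E$ has zero diagonal and entries bounded by $\nu$, Gershgorin's disk theorem places every eigenvalue of $\D^*[i]\D[i]$ in the interval $[1-(d-1)\nu,\, 1+(d-1)\nu]$. The upper endpoint gives \eqref{eq:norm Di Di}, and (assuming $1-(d-1)\nu>0$ so the matrix is invertible) the lower endpoint gives \eqref{eq:norm inv Di Di} since the spectral norm of the inverse is the reciprocal of the smallest eigenvalue.

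The main obstacle is \eqref{eq:norm Ds Ds}, where a scalar Gershgorin argument is not sharp enough. My plan is to lower-bound $\lambda_{\min}(\D_I^*\D_I)$ directly via the Rayleigh quotient. Partition an arbitrary unit vector $\z$ conformally with $I$ as $\z = (\z_{i_1}^T,\ldots,\z_{i_p}^T)^T$ with $\z_{i_\ell} \in \CC^d$ and $\sum_\ell \|\z_{i_\ell}\|_2^2 = 1$. Expanding,
\begin{equation}
\z^*\D_I^*\D_I \z = \sum_{i \in I} \z_i^* \D^*[i]\D[i] \z_i + \sum_{\substack{i,j \in I\\ i \ne j}} \z_i^* \D^*[i]\D[j] \z_j.
\end{equation}
The diagonal-block sum is at least $(1-(d-1)\nu)\sum_i \|\z_i\|_2^2 = 1-(d-1)\nu$ by the bound just established. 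For the off-diagonal sum, apply $\|\D^*[i]\D[j]\| \le d\mu_B$ followed by Cauchy--Schwarz:
\begin{equation}
\Bigl|\sum_{i\ne j} \z_i^*\D^*[i]\D[j]\z_j\Bigr| \le d\mu_B \sum_{i\ne j}\|\z_i\|_2\|\z_j\|_2 \le d\mu_B\bigl((|I|)-1\bigr),
\end{equation}
where the last step uses $(\sum_i\|\z_i\|_2)^2 \le |I|\sum_i\|\z_i\|_2^2 = |I|$ and subtracts off the diagonal $\sum_i\|\z_i\|_2^2=1$. Since $|I|\le k$, this gives $\lambda_{\min}(\D_I^*\D_I) \ge 1-(d-1)\nu-(k-1)d\mu_B$; taking reciprocals yields \eqref{eq:norm Ds Ds}.

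The only subtle point is the Cauchy--Schwarz step in the off-diagonal estimate, where it is important to pull out $d\mu_B$ uniformly from every off-diagonal block before bounding the resulting double sum; once this is done, the bound matches exactly the denominator appearing throughout Sections~\ref{se:adv}, \ref{se:crb}, and \ref{se:gauss}, which is what will make the lemma directly usable in the subsequent proofs.
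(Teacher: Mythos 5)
Your proof of \eqref{eq:norm Di Dj}--\eqref{eq:norm inv Di Di} is exactly the paper's argument: read off the definition of $\mu_B$ for the first bound, and apply Gershgorin's theorem to the block Gram matrix (unit diagonal, off-diagonal entries bounded by $\nu$) for the second and third. For \eqref{eq:norm Ds Ds}, however, you take a genuinely different and equally valid route. The paper invokes a block generalization of the Gershgorin circle theorem (Feingold--Varga) to conclude that every eigenvalue $\lambda$ of $\D_I^*\D_I$ satisfies $\|\M[i,i]-\lambda\I\| \le (k-1)d\mu_B$ for some diagonal block $\M[i,i]$, and then applies scalar Gershgorin a second time to $\M[i,i]-\lambda\I$ to extract $\lambda \ge 1-(d-1)\nu-(k-1)d\mu_B$. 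You instead bound the Rayleigh quotient directly: splitting $\z^*\D_I^*\D_I\z$ into diagonal and off-diagonal block contributions, lower-bounding the former by $1-(d-1)\nu$ via the already-established eigenvalue bound, and controlling the latter by $d\mu_B\sum_{i\ne j}\|\z_i\|_2\|\z_j\|_2 \le d\mu_B(|I|-1)$ using $\bigl(\sum_i\|\z_i\|_2\bigr)^2 \le |I|\sum_i\|\z_i\|_2^2$. The two arguments yield the identical constant. Yours is self-contained and elementary---it effectively reproves the special case of block Gershgorin that is needed, and makes the positive-definiteness (hence invertibility) of $\D_I^*\D_I$ explicit---at the cost of a slightly longer computation; the paper's is shorter on the page but leans on an external matrix-analysis result. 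Your off-diagonal estimate is correct as stated (the double sum over $i\ne j$ equals $\bigl(\sum_i\|\z_i\|_2\bigr)^2-\sum_i\|\z_i\|_2^2 \le |I|-1 \le k-1$), so there is no gap.
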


\begin{proof}
The bound \eqref{eq:norm Di Dj} follows directly from the definition \eqref{eq:def mu B} of block coherence. To prove \eqref{eq:norm Di Di}--\eqref{eq:norm inv Di Di}, observe that the diagonal elements of the matrix $\D^*[i] \D[i]$ equal $1$, while the off-diagonal elements are bounded in magnitude by $\nu$. Therefore, by the Gershgorin circle theorem \cite{golub96}, all eigenvalues of $\D^*[i] \D[i]$ are in the range $[1 - (d-1)\nu, 1 + (d-1)\nu]$, demonstrating \eqref{eq:norm Di Di}. Furthermore, it follows that the eigenvalues of $(\D^*[i] \D[i])^{-1}$ are in the range $[(1 + (d-1)\nu)^{-1}, (1 - (d-1)\nu)^{-1}]$, leading to \eqref{eq:norm inv Di Di}.

It remains to prove \eqref{eq:norm Ds Ds}. To this end, let $|I|=\ell \le k$ and write $\D_I^* \D_I$ as
\beq
\D_I^*\D_I =
\begin{pmatrix}
\M[1,1] & \M[1,2] & \cdots & \M[1,\ell] \cr
\M[2,1] & \M[2,2] & \cdots & \M[2,\ell] \cr
\vdots  & \vdots  & \ddots & \vdots  \cr
\M[\ell,1] & \M[\ell,2] & \cdots & \M[\ell,\ell]
\end{pmatrix}
\eeq
where each $\M[i,j]$ is a $d \times d$ matrix containing the correlations between two blocks of dictionary atoms. From the definition of block coherence, we have
\beq
\|\M[i,j]\| \le d \mu_B, \quad \text{for all } i \ne j.
\eeq
By a generalization of the Gershgorin circle theorem \cite[Thm.~2]{feingold62}, it follows that all eigenvalues $\lambda$ of $\D_I^* \D_I$ satisfy
\begin{align} \label{eq:gimel 0}
\| \M[i,i] - \lambda \I \| \le \sum_{j \ne i} \|\M[i,j]\|
&\le (\ell-1)d\mu_B \notag\\
&\le (k-1)d\mu_B.
\end{align}

Now, from the definition of sub-coherence, the off-diagonal elements of $\M[i,i]$ are no larger in magnitude than $\nu$, while the diagonal elements of $\M[i,i]$ all equal $1$. Therefore, by the Gershgorin circle theorem, given an arbitrary constant $\lambda$, all eigenvalues of the $d \times d$ matrix $\M[i,i] - \lambda \I$ are in the range $[1-\lambda-(d-1)\nu, 1-\lambda+(d-1)\nu]$. Consequently
\begin{align}
\|\M[i,i] - \lambda\I\|
&\ge 1 - \lambda - (d-1)\nu.
\end{align}
Combining with \eqref{eq:gimel 0} and rearranging, we conclude that all eigenvalues of $\D_I^* \D_I$ satisfy
\beq
\lambda \ge 1 - (d-1)\nu - (k-1)d\mu_B.
\eeq
Consequently, the eigenvalues of $(\D_I^* \D_I)^{-1}$ are no larger than $(1 - (d-1)\nu - (k-1)d\mu_B)^{-1}$, establishing \eqref{eq:norm Ds Ds}.
\end{proof}

\begin{lemma} \label{le:aleph}
Consider the setting of Section~\ref{se:setting}, and suppose it is known that
\beq \label{eq:noise bound}
\max_{1 \le j \le M} \|\D^*[j] \w\|_2 < \tau
\eeq
for a given value $\tau > 0$. If the dictionary $\D$ satisfies
\beq \label{eq:maxmax cond}
\left( 1 - (d-1)\nu \right) \xmax > 2\tau + (2s-1)d \mu_B \xmax
\eeq
then
\beq \label{eq:maxmax res}
\max_{j \in S} \| \D^*[j] \y \|_2 > \max_{j \notin S} \|\D^*[j] \y \|_2
\eeq
where $S = \supp(\x)$.

If \eqref{eq:maxmax cond} is replaced by the stronger condition
\beq \label{eq:minmax cond}
\left( 1 - (d-1)\nu \right) \xmin > 2\tau + (2s-1)d \mu_B \xmax
\eeq
then
\beq \label{eq:minmax res}
\min_{j \in S} \| \D^*[j] \y \|_2 > \max_{j \notin S} \|\D^*[j] \y \|_2.
\eeq
\end{lemma}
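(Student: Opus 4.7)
The plan is to expand $\D^*[j]\y = \sum_{i \in S} \D^*[j]\D[i]\x[i] + \D^*[j]\w$ and bound this inner product separately for $j \in S$ and $j \notin S$, using the block-coherence estimates from Lemma~\ref{le:matrix norms} together with the noise hypothesis \eqref{eq:noise bound}. The two conclusions \eqref{eq:maxmax res} and \eqref{eq:minmax res} will then follow by subtracting the upper bound for $j \notin S$ from the appropriate lower bound for $j \in S$ and invoking the hypothesis \eqref{eq:maxmax cond} or \eqref{eq:minmax cond}.

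For $j \in S$ I would isolate the ``diagonal'' term and write
\beq
\D^*[j]\y = \D^*[j]\D[j]\,\x[j] + \sum_{i \in S,\, i \ne j} \D^*[j]\D[i]\,\x[i] + \D^*[j]\w.
\eeq
By the reverse triangle inequality, $\|\D^*[j]\y\|_2$ is at least $\|\D^*[j]\D[j]\,\x[j]\|_2$ minus the norms of the other two terms. The diagonal term is bounded below using the fact that, by the Gershgorin argument in the proof of Lemma~\ref{le:matrix norms}, the smallest singular value of $\D^*[j]\D[j]$ is at least $1-(d-1)\nu$, giving $\|\D^*[j]\D[j]\,\x[j]\|_2 \ge (1-(d-1)\nu)\|\x[j]\|_2$. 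Each cross term is bounded above by $\|\D^*[j]\D[i]\|\,\|\x[i]\|_2 \le d\mu_B \xmax$ using \eqref{eq:norm Di Dj}, and there are $s-1$ such terms. The noise term is bounded by $\tau$ via \eqref{eq:noise bound}. Putting this together gives, for every $j \in S$,
\beq
\|\D^*[j]\y\|_2 \ge (1-(d-1)\nu)\|\x[j]\|_2 - (s-1)d\mu_B \xmax - \tau.
\eeq
Specializing to the index attaining $\xmax$ yields a lower bound on $\max_{j \in S}\|\D^*[j]\y\|_2$; specializing to the index attaining $\xmin$ yields a lower bound on $\min_{j \in S}\|\D^*[j]\y\|_2$.

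For $j \notin S$ the diagonal term is absent, so
\beq
\|\D^*[j]\y\|_2 \le \sum_{i \in S} \|\D^*[j]\D[i]\|\,\|\x[i]\|_2 + \|\D^*[j]\w\|_2 \le s\,d\mu_B\,\xmax + \tau,
\eeq
again using \eqref{eq:norm Di Dj} and \eqref{eq:noise bound}. Maximizing over $j \notin S$ preserves this upper bound.

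To finish, I subtract the upper bound from the lower bound and require positivity. For the max/max statement I use the lower bound at the index attaining $\xmax$, which after rearrangement reduces to exactly \eqref{eq:maxmax cond}; for the min/max statement I use the lower bound at the index attaining $\xmin$, which reduces to exactly \eqref{eq:minmax cond}. There is no real obstacle here beyond careful bookkeeping — the only subtle point is remembering that the diagonal block contributes the factor $1-(d-1)\nu$ (not $1$) because $\D^*[j]\D[j]$ is not exactly the identity, and that only $s-1$ off-diagonal blocks appear on the $j \in S$ side while $s$ appear on the $j \notin S$ side, which is what produces the $(2s-1)d\mu_B$ coefficient after combining the two bounds.
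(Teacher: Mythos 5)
Your proposal is correct and follows essentially the same route as the paper's proof: expand $\D^*[j]\y$, lower-bound the diagonal term by $(1-(d-1)\nu)\|\x[j]\|_2$ via the Gershgorin eigenvalue estimate, bound the $s-1$ (resp.\ $s$) cross terms by $d\mu_B\xmax$ each and the noise by $\tau$, then compare. The only cosmetic difference is that you keep the count $s$ throughout (matching the $(2s-1)$ in the lemma statement exactly), whereas the paper loosens $s$ to $k$ mid-proof; your bookkeeping is, if anything, the cleaner of the two.
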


\begin{proof}
The proof is an extension of \cite[Lemma~3]{ben-haim10} to the block-sparse case, and is ultimately inspired by \cite{donoho06}. We first note that
\begin{align} \label{eq:prf aleph 1}
&\max_{j \notin S} \|\D^*[j]\y\|_2
=   \max_{j \notin S} \left\| \D^*[j]\w + \sum_{i \in S} \D^*[j] \D[i] \x[i] \right\|_2 \notag\\
&\hspace{5mm} \le \max_{j \notin S} \|\D^*[j] \w\|_2 + \max_{j \notin S} \sum_{i \in S} \|\D^*[j] \D[i]\| \, \xmax.
\end{align}
By \eqref{eq:noise bound}, the first term in \eqref{eq:prf aleph 1} is smaller than $\tau$. Together with \eqref{eq:norm Di Dj}, we obtain
\beq \label{eq:prf aleph 5a}
\max_{j \notin S} \|\D^*[j]\y\|_2 < \tau + s d \mu_B \xmax \le \tau + k d \mu_B \xmax.
\eeq

On the other hand,
\begin{align} \label{eq:prf aleph 6}
&\max_{j \in S} \|\D^*[j] \y\|_2
= \max_{j \in S} \left\| \D^*[j]\w + \sum_{i \in S} \D^*[j]\D[i]\x[i] \right\|_2 \notag\\
&\hspace{5mm} \ge \max_{j \in S} \|\D^*[j] \D[j] \x[j]\|_2 \notag\\
&\hspace{10mm} - \max_{j \in S} \left\| \D^*[j]\w + \sum_{i \in S \backslash \{j\}} \D^*[j]\D[i]\x[i] \right\|_2.
\end{align}
As we have seen in the proof of Lemma~\ref{le:matrix norms}, the eigenvalues of $\D^*[j] \D[j]$ are bounded in the range $[1 - (d-1)\nu, 1 + (d-1)\nu]$. Consequently
\begin{align}
\max_{j \in S} \|\D^*[j] \D[j] \x[j]\|_2
&\ge \max_{j \in S} (1 - (d-1)\nu)\|\x[j]\|_2 \notag\\
&= (1 - (d-1)\nu) \xmax.
\end{align}
Combining this result with \eqref{eq:prf aleph 6}, we have
\begin{align}
&\max_{j \in S} \|\D^*[j] \y\|_2 \ge (1 - (d-1)\nu) \xmax \notag\\
&\hspace{10mm} - \max_{j \in S} \sum_{i \in S \backslash \{j\}} \| \D^*[j] \D[i] \x[i] \|_2 - \max_{j \in S} \|\D^*[j] \w\|_2.
\end{align}
Together with \eqref{eq:noise bound} and \eqref{eq:norm Di Dj}, this implies that
\begin{align} \label{eq:prf aleph 8}
&\max_{j \in S} \|\D^*[j] \y\|_2 \notag\\
&\hspace{5mm} > (1 - (d-1)\nu) \xmax - (k-1)\xmax d \mu_B - \tau \notag\\
&\hspace{5mm} = (1 - (d-1)\nu) \xmax - (2k-1)\xmax d \mu_B - 2\tau \notag\\
&\hspace{10mm}  + k \xmax d \mu_B + \tau.
\end{align}
Merging the results \eqref{eq:prf aleph 5a} and \eqref{eq:prf aleph 8} yields
\begin{align}
&\max_{j \in S} \|\D^*[j] \y\|_2
> \max_{j \notin S} \|\D^*[j]\y\|_2 \notag\\
&\hspace{5mm} + (1 - (d-1)\nu) \xmax - (2k-1)\xmax d \mu_B - 2\tau.
\end{align}
Consequently, if \eqref{eq:maxmax cond} holds, then \eqref{eq:maxmax res} follows, as required.

In a similar fashion, observe that
\begin{align} \label{eq:prf aleph 8a}
&\min_{j \in S} \|\D^*[j]\y\|_2
= \min_{j \in S} \left\| \sum_{i \in S} \D^*[j]\D[i]\x[i] + \D^*[j]\w \right\|_2 \notag\\
&\hspace{5mm} \ge \min_{j \in S} \|\D^*[j]\D[j]\x[j]\|_2 \notag\\
&\hspace{10mm}   - \max_{j \in S} \sum_{i \in S \backslash \{j\}} \|\D^*[j]\D[i]\x[i]\|_2 - \|\D^*[j]\w\|_2.
\end{align}
As noted previously, all eigenvalues of $\D^*[j] \D[j]$ are larger than or equal to $1 - (d-1)\nu$, and therefore
\beq \label{eq:prf aleph 9}
\min_{j \in S} \|\D^*[j]\D[j]\x[j]\|_2 \ge (1 - (d-1)\nu) \xmin.
\eeq
Furthermore, using \eqref{eq:norm Di Dj} we have, for $i \ne j$,
\beq \label{eq:prf aleph 10}
\|\D^*[j] \D[i] \x[i]\|_2 \le \|\D^*[j]\D[i]\| \, \xmax \le d \mu_B \xmax.
\eeq
Substituting \eqref{eq:noise bound}, \eqref{eq:prf aleph 9}, and \eqref{eq:prf aleph 10} into \eqref{eq:prf aleph 8a} provides us with
\begin{align}
&\min_{j \in S} \|\D^*[j]\y\|_2 \notag\\
&\hspace{5mm} > (1 - (d-1)\nu)\xmin - (k-1)d\mu_B \xmax - \tau \notag\\
&\hspace{5mm} = (1 - (d-1)\nu)\xmin - (2k-1)d\mu_B \xmax - 2\tau \notag\\
&\hspace{10mm}  + k d \mu_B \xmax + \tau.
\end{align}
Finally, using \eqref{eq:prf aleph 5a} we obtain
\begin{align}
&\min_{j \in S} \|\D^*[j]\y\|_2 > \max_{j \notin S} \|\D^*[j]\y\|_2 \notag\\
&\hspace{5mm} + (1 - (d-1)\nu)\xmin - (2k-1)d\mu_B \xmax - 2\tau.
\end{align}
Therefore, if the condition \eqref{eq:minmax cond} is satisfied, then \eqref{eq:minmax res} holds, completing the proof.
\end{proof}

We are now ready to prove Theorems \ref{th:bth adv} and \ref{th:bomp adv}.

\begin{proof}[Proof of Theorem~\ref{th:bth adv}]
Using \eqref{eq:adv noise} and \eqref{eq:norm Di Di}, we have for all~$j$
\begin{align}
\|\D^*[j] \w\|_2 \le \|\D[j]\| \cdot \|\w\|_2 \le \eps \sqrt{1 + (d-1)\nu} .
\end{align}
Thus, \eqref{eq:noise bound} holds with $\tau = \eps \sqrt{1 + (d-1)\nu}$.

In light of \eqref{eq:bth adv cond}, the condition \eqref{eq:minmax cond} for the second part of Lemma~\ref{le:aleph} holds, and therefore, by Lemma~\ref{le:aleph}, we conclude that \eqref{eq:minmax res} holds. It follows that all blocks $\D[i]$ with $i \in S$ are more highly correlated than the off-support blocks $\D[i], i \notin S$. Thus, the estimated support $\hS$ contains the true support set $S$ (with the possible addition of superfluous indices if $s < k$). It follows from the definition \eqref{eq:def xbth} of $\xbth$ that $(\xbth)_\hS = D_\hS^\pinv \y$, and thus
\begin{align} \label{eq:gimel 1}
&\|\x - \xbth\|_2^2
= \|\x_\hS - (\xbth)_\hS\|_2^2 \notag\\
&\hspace{5mm} =   \|\D_\hS^\pinv \D_\hS \x_\hS - \D_\hS^\pinv \y \|_2^2 \notag\\
&\hspace{5mm} \le \|\D_\hS^\pinv \|^2 \cdot \| \y - \D_\hS\x \|_2^2 \notag\\
&\hspace{5mm} =   \|\D_\hS^\pinv \|^2 \cdot \| \w \|_2^2
\end{align}
where we have used the fact that $\D_\hS^\pinv \D_\hS = \I$, which follows from our assumption that $\D_I$ has full row rank for any set $I$ of size $s$ (see Section~\ref{se:setting}).

Since $\xmin \le \xmax$, it follows from \eqref{eq:bth adv cond} that
\beq
1 - (d-1)\nu > (2k-1)d\mu_B.
\eeq Therefore, we may apply \eqref{eq:norm Ds Ds}, yielding
\begin{align} \label{eq:prf bth adv 3}
\|\D_\hS^\pinv \|^2
&= \|(\D_S^* \D_S)^{-1}\| \notag\\
&\le \frac{1}{1 - (d-1)\nu - (k-1)d\mu_B}.
\end{align}
Combining this result with \eqref{eq:gimel 1} and using \eqref{eq:adv noise}, we obtain \eqref{eq:bth adv}, as required.
\end{proof}

\begin{proof}[Proof of Theorem~\ref{th:bomp adv}]
As shown in the proof of Theorem~\ref{th:bth adv}, it follows from \eqref{eq:adv noise} that \eqref{eq:noise bound} holds with $\tau = \eps \sqrt{1 + (d-1)\nu}$. From \eqref{eq:bomp adv cond} we then have
\beq
(1 - (d-1)\nu) \xmin > 2\tau + (2k-1) d \mu_B \xmin.
\eeq
Since $\xmax \ge \xmin$, this implies the condition \eqref{eq:maxmax cond} for the first part of Lemma~\ref{le:aleph}. Thus, by Lemma~\ref{le:aleph}, the dictionary block most highly correlated with $\y$ is a block within the support $S$ of $\x$. In other words, the first iteration in the BOMP algorithm correctly identifies an element within the support $S$.

The proof continues by induction. Assume we have reached the $\ell$th iteration with $2 \le \ell \le s$ and that all previous iterations have correctly identified elements of $S$. In other words, using the notation of Section~\ref{se:algorithms}, we have $i_1, \ldots, i_{\ell-1} \in S$.

By definition, we now have
\beq \label{eq:prf bth adv 1}
\r^\ell = \y - \D \x^{\ell-1} = \D \tx^{\ell-1} + \w
\eeq
where $\tx^{\ell-1} \triangleq \x - \x^{\ell-1}$ is the estimation error after $\ell-1$ iterations. Since $\supp(\x) = S$ and, by induction, $\supp(\x^{\ell-1}) \subset S$, we have $\supp(\tx^{\ell-1}) \subset S$. Furthermore, $\ell-1 < s$, so that $\supp(\x^{\ell-1})$ contains less than $s$ elements, and is thus a strict subset of $S$. It follows that at least one nonzero block in $\tx^{\ell-1}$ is equal to the corresponding block in $\x$. Therefore
\beq \label{eq:prf bth adv 2}
\max_j \| \tx^{\ell-1}[j] \|_2 \ge \xmin.
\eeq

To summarize, by \eqref{eq:prf bth adv 1}, $\r^\ell$ can be thought of as a noisy measurement of the block sparse vector $\tx^{\ell-1}$, which contains a block whose norm is at least $\xmin$. Using \eqref{eq:prf bth adv 2} and \eqref{eq:bomp adv cond}, we find that the condition \eqref{eq:maxmax cond} holds for this modified estimation problem. Consequently, by Lemma~\ref{le:aleph}, we have
\beq
\max_{j \in S} \|\D^*[j] \r^{\ell-1}\|_2 > \max_{j \notin S} \|\D^*[j] \r^{\ell-1}\|_2.
\eeq
Therefore, by \eqref{eq:bomp i ell}, the $\ell$th iteration of the BOMP algorithm will choose an index $i_\ell$ belonging to the correct support set $S$, as long as $\ell \le s$.

Since the BOMP algorithm never chooses the same support element twice, we conclude that precisely the $s$ elements of $S$ will be identified in the first $s$ iterations. If $s<k$, then the remaining iterations will identify some additional elements not in $S$, so that ultimately the estimated support set $\hS = \{ i_1, \ldots, i_k \}$ will satisfy $\hS \supseteq S$. The estimate $\xbomp$ therefore satisfies $(\xbomp)_\hS = \D_\hS^\pinv \y$. Following the procedure \eqref{eq:gimel 1}--\eqref{eq:prf bth adv 3} in the proof of Theorem~\ref{th:bth adv}, we obtain in an identical manner the required result \eqref{eq:bomp adv}.
\end{proof}

%%%%%%%%%%%%%%%%%%%%%%%%%%%%%%%%%%%%%%%%%%%%%%%%%%%%%%%%%%%%%%%%%%%%%%%%%%%%%%%%%%%%%%%%%%%%%%%%%%%%%%%%%%%%%%%%
\section{Proof of Theorem~\ref{th:crb}}
\label{ap:crb}

To compute the CRB, we must first determine the Fisher information matrix $\J(\x)$ for estimating $\x$ from $\y$ of \eqref{eq:y=Dx+w}. This can be done using a standard formula \cite[p.~85]{kay93} and yields
\beq \label{eq:J}
\J(\x) = \frac{1}{\sigma^2} \D^* \D.
\eeq

We now identify, for each $\x \in \XX$, an orthonormal basis for the feasible direction subspace, which is defined as the smallest subspace of $\CC^N$ containing all feasible directions at $\x$. To this end, denote by $\e_i$ the $i$th column of the $N \times N$ identity matrix. Consider first points $\x \in \XX$ for which $s<k$. In other words, these are parameter values whose support $S$ contains fewer than $k$ elements. For such values of $\x$, we have, for any $\eps$ and any $1 \le i \le N$,
\beq
|\supp(\x + \eps \e_i)| \le |S| + 1 < k + 1 \le k
\eeq
and therefore $\x + \eps \e_i \in \XX$ for any $\eps$ and for any $i$. Consequently, the set of feasible directions at $\x$ includes $\{ \e_1, \ldots, \e_N \}$, and the feasible direction subspace is therefore $\CC^N$ itself. Thus, for values $\x$ containing fewer than $k$ nonzero blocks, a convenient choice of a basis for the feasible direction subspace consists of the columns of the identity matrix.

Next, consider maximal-support parameter values, i.e., vectors $\x$ for which $s=k$. It is now no longer possible to add any vector $\e_i$ to $\x$ without violating the constraints. Indeed, it is not difficult to see that the only feasible directions are linear combinations of the unit vectors $\e_i$ for which $i$ belongs to one of the blocks in $S$. These unit vectors can thus be chosen as a basis for the feasible direction subspace.

Let $\U(\x)$ be a matrix whose columns comprise the chosen orthonormal basis for the feasible direction subspace at $\x$. Note that the dimensions of $\U(\x)$ change with $\x$; specifically, $\U(\x) = \I_{N \times N}$ when $|S|<k$, and $\U(\x)$ is an $N \times sd$ matrix otherwise. A necessary condition for a finite-variance $\XX$-unbiased estimator to exist at a point $\x$ is \cite[Thm.~1]{ben-haim09b}
\beq \label{eq:unb exist cond}
\Ra{\U(x)\U^*(x)} \subseteq \Ra{\U(\x)\U^*(\x) \J(\x) \U(\x)\U^*(\x)}.
\eeq
When $s<k$, we have $\U(\x)=\I$. In this case, using \eqref{eq:J}, the condition \eqref{eq:unb exist cond} becomes
\beq
\CC^N \subseteq \Ra{\J(\x)} = \Ra{\D^* \D}.
\eeq
Since the dimensions of $\D$ are $L \times N$ with $L < N$, the rank of $\D^* \D$ is at most $L$, and thus $\Ra{\D^*\D}$ cannot include the entire space $\CC^N$. We conclude that in this case, \eqref{eq:unb exist cond} does not hold, and therefore no $\XX$-unbiased estimator exists at points $\x$ for which $|S|<s$, proving part (a) of the theorem.

Let us now turn to maximal-support parameter values $\x$. As we have seen above, in this case the matrix $\U(\x)$ consists of the columns $\e_i$ for which $i$ is an element of a block within the support of $\x$. Therefore, the product $\D \U(\x)$ selects those atoms of $\D$ belonging to blocks within $S$, i.e., $\D \U(\x) = \D_S$. Using \eqref{eq:J}, this leads to
\beq \label{eq:UJU=DsDs}
\U^*(\x) \J(\x) \U(\x) =  \frac{1}{\sigma^2} \D_S^* \D_S
\eeq
which is invertible by assumption (see Section~\ref{se:setting}). It follows that the condition \eqref{eq:unb exist cond} holds for maximal-support parameters $\x$. One can therefore apply \cite[Thm.~1]{ben-haim09b}, which states that for such values of $\x$,
\beq
\MSE(\hx,\x) \ge \Tr \! \left( \U(\x) \left( \U^*(\x) \J(\x) \U(\x) \right)^\pinv \U^*(\x) \right).
\eeq
Combining with \eqref{eq:UJU=DsDs} and using the fact that $\U^*(\x) \U(\x) = \I$, we obtain \eqref{eq:crb}, proving part (b) of the theorem.

%%%%%%%%%%%%%%%%%%%%%%%%%%%%%%%%%%%%%%%%%%%%%%%%%%%%%%%%%%%%%%%%%%%%%%%%%%%%%%%%%%%%%%%%%%%%%%%%%%%%%%%%%%%%%%%%
\section{Proofs for Gaussian Noise}
\label{ap:prf gauss}

We begin with two lemmas which prove some useful properties of the Gaussian distribution. The first of these is a generalization of a result due to \v{S}id\'ak \cite{sidak67}.

\begin{lemma} \label{le:sidak gen}
Let $\bv_1, \ldots, \bv_M$ be a set of $M$ jointly Gaussian random vectors. Suppose that $\E{\bv_i} = \zero$ for all $i$, but that the covariances of the vectors are unspecified and that the vectors are not necessarily independent. We then have
\begin{align}
&\Pr{\|\bv_1\|_2 \le c_1, \|\bv_2\|_2 \le c_2, \ldots, \|\bv_M\|_2 \le c_M} \notag\\
&\hspace{5mm} \ge \Pr{\|\bv_1\|_2 \le c_1} \cdot \Pr{\|\bv_2\|_2 \le c_2} \cdots \notag\\
&\hspace{45mm} \cdots \Pr{\|\bv_M\|_2 \le c_M}.
\end{align}
\end{lemma}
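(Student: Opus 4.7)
The plan is to recognize the inequality as a Gaussian correlation statement in disguise. Stack the vectors into a single centered, jointly Gaussian vector $\z = [\bv_1^T, \ldots, \bv_M^T]^T$, so that each event $\{\|\bv_i\|_2 \le c_i\}$ becomes $\{\z \in C_i\}$, where $C_i$ is a cylindrical subset of the ambient space that is both symmetric under negation and convex. The claim is then that the $M$ symmetric convex sets $C_1,\ldots,C_M$ are positively correlated under the law of $\z$, and I would prove this by induction on $M$.

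The base case $M=1$ is trivial. For the inductive step, condition on $\bv_M = v$ and let
\[
g(v) = \Pr{\|\bv_1\|_2 \le c_1, \ldots, \|\bv_{M-1}\|_2 \le c_{M-1} \,\big|\, \bv_M = v}.
\]
Since $(\bv_1, \ldots, \bv_{M-1}) \mid \bv_M = v$ is Gaussian with conditional mean depending linearly on $v$ and a fixed (unconditional) covariance, Anderson's theorem implies that $g(v)$ is symmetric in $v$ and attains its maximum at $v=0$, while Pr\'ekopa's theorem implies that $g$ is log-concave. Hence the product $g(v) f_{\bv_M}(v)$, with $f_{\bv_M}$ the Gaussian marginal density of $\bv_M$, is a symmetric log-concave function of $v$.

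It then suffices to establish
\[
\int_{\|v\|_2 \le c_M} g(v) f_{\bv_M}(v)\,dv \;\ge\; \left( \int g(v) f_{\bv_M}(v)\,dv \right)\left( \int_{\|v\|_2 \le c_M} f_{\bv_M}(v)\,dv \right),
\]
which, after writing $g$ via the layer-cake formula as a non-negative mixture of indicators of its (symmetric, convex) superlevel sets, reduces to a two-set Gaussian correlation inequality between $\{\|v\|_2 \le c_M\}$ and another symmetric convex set under the centered Gaussian law $f_{\bv_M}$. Chaining this with the induction hypothesis applied to $\int g(v) f_{\bv_M}(v)\,dv = \Pr{\|\bv_1\|_2 \le c_1,\ldots,\|\bv_{M-1}\|_2 \le c_{M-1}}$ yields the desired bound.

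The main obstacle is exactly that two-set correlation step. In full generality it is the Gaussian correlation conjecture; here, however, the relevant set is a Euclidean ball centered at the origin under a centered Gaussian measure, and this special case admits a direct argument via Anderson's inequality together with log-concavity of Gaussian marginals. This ball-case correlation inequality---a natural vector generalization of \v{S}id\'ak's scalar slab result---is precisely the ingredient that lets the induction close.
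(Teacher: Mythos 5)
Your scaffolding---induction on $M$, conditioning on one of the vectors, and invoking Anderson's theorem for the resulting conditional probability $g(v)$---is essentially the paper's proof. The divergence, and the gap, is in how you close the inductive step. You reduce the needed covariance inequality for $g$ against the indicator of the ball $\{\|v\|_2\le c_M\}$, via the layer-cake decomposition of the symmetric log-concave function $g$, to the two-set inequality $\gamma(B\cap K)\ge\gamma(B)\,\gamma(K)$ for a centered Gaussian law $\gamma$, a centered ball $B$, and an \emph{arbitrary} symmetric convex set $K$: the superlevel sets of $g$ are general symmetric convex bodies, not balls. You then assert that this case ``admits a direct argument via Anderson's inequality together with log-concavity of Gaussian marginals.'' That assertion is the missing idea: ball-versus-symmetric-convex is precisely Harg\'e's correlation theorem (Ann.\ Probab., 1999), a substantial result proved by semigroup interpolation, and no elementary derivation of it from Anderson plus log-concavity is known. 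As written, the only nontrivial step of your argument is delegated to an unproved (albeit true) theorem.

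The paper closes the step without any two-set correlation inequality over a single Gaussian space, by exploiting structure that your layer-cake reduction discards: Anderson's theorem gives not merely that $g$ peaks at the origin, but that $g(\alpha v)$ is non-increasing in $\alpha\ge 0$ along every ray. The paper uses this to argue that $\Pr{\|\bv_1\|_2\le c_1 \,\big|\, \|\bv_2\|_2\le c_2}$, being a weighted average of such ray-monotone conditional probabilities over the ball of radius $c_2$, is non-increasing in $c_2$ and converges to the unconditional probability as $c_2\to\infty$; hence it is bounded below by the unconditional probability, and Bayes's rule finishes the induction. If you want to repair your route, replace the layer-cake step with this monotone-averaging argument rather than appealing to a correlation inequality between two sets. (Two smaller points: the conditional covariance of $(\bv_1,\ldots,\bv_{M-1})$ given $\bv_M=v$ is the Schur complement, not the ``unconditional'' covariance---what matters, and what is true, is only that it does not depend on $v$; and the Pr\'ekopa log-concavity of $g$, while correct, ends up playing no role once the Harg\'e step is removed.)
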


\begin{proof}
We will demonstrate that
\begin{align} \label{eq:sidak 1}
&\Pr{\|\bv_1\|_2 \le c_1, \|\bv_2\|_2 \le c_2, \ldots, \|\bv_M\|_2 \le c_M} \notag\\
&\hspace{5mm} \ge \Pr{\|\bv_1\|_2 \le c_1} \Pr{\|\bv_2\|_2 \le c_2, \ldots, \|\bv_M\|_2 \le c_M}.
\end{align}
The result then follows by induction. For simplicity of notation, we will prove that \eqref{eq:sidak 1} holds for the case $M=2$; the general result can be shown in the same manner.

Denote by $f(\bv_1|\bv_2)$ the pdf of $\bv_1$ conditioned on $\bv_2$. Observe that, for a deterministic value $\w$, the pdf $f(\bv_1|\w)$ defines a Gaussian random vector whose mean depends linearly on $\w$, but whose covariance is constant in $\w$. Therefore, using a result due to Anderson \cite{anderson55}, it follows that
\beq
\Pr{ \|\bv_1\|_2 \le c_1 | \bv_2 = \alpha \w } = \int_{\|\u_1\|_2 \le c_1} f(\u_1|\alpha \w) d\u
\eeq
is a non-increasing function of $\alpha$.

Next, denoting by $f(\bv_2)$ the marginal pdf of $\bv_2$, we have
\begin{align}
a(c_1,c_2) &\triangleq \Pr{\|\bv_1\|_2 \le c_1 \big| \, \|\bv_2\|_2 \le c_2} \notag\\
%&= \frac{\Pr{\|\bv_1\|_2 \le c_1, \|\bv_2\|_2 \le c_2}}{\Pr{\|\bv_2\|_2 \le c_2}} \notag\\
&= \frac{ \int_{\|\u\|_2 \le c_1} \int_{\|\w\|_2 \le c_2} f(\u|\w) f(\w) \, d\w \, d\u}{\Pr{\|\bv_2\|_2 \le c_2}} \notag\\
&= \frac{ \int_{\|\w\|_2 \le c_2} \Pr{\|\bv_1\|_2 \le c_1 | \bv_2 = \w} f(\w) \, d\w }{ \int_{\|\w\|_2 \le c_2} f(\w) \, d\w }.
\end{align}
Thus, the function $a(c_1,c_2)$ is a weighted average of expressions of the form $\Pr{\|\bv_1\|_2 \le c_1 | \bv_2 = \w}$ for values of $\w$ satisfying $\|\w\|_2 \le c_2$. However, as we have shown, $\Pr{\|\bv_1\|_2 \le c_1 | \bv_2 = \w}$ is non-increasing in $\|\w\|_2$. Consequently, $a(c_1,c_2)$ is non-increasing in $c_2$.

On the other hand, observe that as $c_2 \rightarrow \infty$, the probability of the event $\|\bv_2\|_2 \le c_2$ converges $1$. Thus we have
\beq
\lim_{c_2 \rightarrow \infty} a(c_1,c_2) = \Pr{\|\bv_1\|_2 \le c_1}.
\eeq
Combined with the fact that $a(c_1,c_2)$ is non-increasing in $c_2$, we find that
\beq
a(c_1,c_2) \ge \Pr{\|\bv_1\|_2 \le c_1} \quad \text{for all $c_1, c_2$}.
\eeq
Using the definition of $a(c_1,c_2)$ and applying Bayes's rule, we obtain
\begin{align}
&\Pr{\|\bv_1\|_2 \le c_1, \|\bv_2\|_2 \le c_2} \notag\\
&\hspace{5mm} \ge \Pr{\|\bv_1\|_2 \le c_1} \Pr{\|\bv_2\|_2 \le c_2}
\end{align}
and thus complete the proof.
\end{proof}

Our next lemma bounds the tail probability of the chi-squared distribution.

\begin{lemma} \label{le:chi square}
Let $\u$ be a $d$-dimensional Gaussian random vector having mean zero and covariance $\I$. Then, for any $t \ge 1$, we have
\begin{subequations} \label{eq:le:chi square both}
\begin{align}
\Pr{\|\u\|_2^2 \ge t^2}
&\le \frac{(d-2)!! \lceil d/2 \rceil}{2^{d/2-1} \Gamma(d/2)} t^{d-2} e^{-t^2/2} \label{eq:le:chi square gamma} \\
&\le 0.8 d t^{d-2} e^{-t^2/2} \label{eq:le:chi square}
\end{align}
\end{subequations}
where $\Gamma(z) \triangleq \int_0^\infty t^{z-1} e^{-t} dt$ is the Gamma function and
\beq
n!! \triangleq \prod_{0 \le i < n/2} (n-2i)
\eeq
is the double factorial operator.
\end{lemma}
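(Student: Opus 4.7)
The plan is to recognize that $\|\u\|_2^2$ follows a chi-square distribution with $d$ degrees of freedom, rewrite the tail probability as a one-dimensional integral, and bound that integral by repeated integration by parts.

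First, I would write
\beq
\Pr{\|\u\|_2^2 \ge t^2} = \int_{t^2}^\infty \frac{x^{d/2-1} e^{-x/2}}{2^{d/2} \Gamma(d/2)} \, dx,
\eeq
and then substitute $s = \sqrt{x}$ to reduce this to $\frac{1}{2^{d/2-1} \Gamma(d/2)} I_{d-1}(t)$, where $I_k(t) \triangleq \int_t^\infty s^k e^{-s^2/2} ds$. In this form, \eqref{eq:le:chi square gamma} reduces to showing $I_{d-1}(t) \le \lceil d/2 \rceil (d-2)!! \, t^{d-2} e^{-t^2/2}$ for $t \ge 1$.

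To establish this claim, I would apply integration by parts with $u = s^{k-1}$ and $dv = s e^{-s^2/2} ds$, obtaining the recursion $I_k(t) = t^{k-1} e^{-t^2/2} + (k-1) I_{k-2}(t)$. Unrolling the recursion down to $I_1(t) = e^{-t^2/2}$ when $d$ is even, or down to $I_0(t) = \sqrt{2\pi}\, Q(t)$ when $d$ is odd, writes $I_{d-1}(t)$ as a sum of exactly $\lceil d/2 \rceil$ terms of the form $c_j t^{d-2-2j} e^{-t^2/2}$, with each coefficient satisfying $c_j \le (d-2)!!$. For the residual term that appears in the odd-$d$ case, I would invoke the standard Mills ratio bound $Q(t) \le \frac{1}{t\sqrt{2\pi}} e^{-t^2/2}$, which reshapes that term into $(d-2)!! \cdot t^{-1} e^{-t^2/2}$. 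Because $t \ge 1$, every power of $t$ encountered, including the negative power $t^{-1}$ from the Mills term, is dominated by $t^{d-2}$. Bounding each of the $\lceil d/2 \rceil$ summands by $(d-2)!! \, t^{d-2} e^{-t^2/2}$ completes the proof of \eqref{eq:le:chi square gamma}.

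To deduce the coarser bound \eqref{eq:le:chi square}, it then suffices to verify the numerical inequality $\frac{(d-2)!! \lceil d/2 \rceil}{2^{d/2-1} \Gamma(d/2)} \le 0.8 d$ for every $d \ge 1$, which I would handle by splitting on parity. For even $d = 2m$, the identities $(2m-2)!! = 2^{m-1}(m-1)!$ and $\Gamma(m) = (m-1)!$ collapse the ratio to exactly $m = d/2$, well below $0.8 d$. For odd $d = 2m+1$, the identities $(2m-1)!! = (2m)!/(2^m m!)$ and $\Gamma(m + \tfrac{1}{2}) = (2m)!\sqrt{\pi}/(4^m m!)$ simplify the ratio to $(m+1)\sqrt{2/\pi}$, and since $\sqrt{2/\pi} < 0.8$ one readily checks $(m+1)\sqrt{2/\pi} \le 0.8(2m+1)$ for all $m \ge 0$. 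The main obstacle is essentially bookkeeping: keeping the even and odd cases cleanly separated throughout the recursion, managing the Mills-ratio remainder so that the $t \ge 1$ hypothesis is used only where it is genuinely needed, and carrying the double-factorial constants through to the final gamma-function simplification.
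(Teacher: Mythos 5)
Your proof is correct and follows essentially the same route as the paper: both express the chi-squared tail as a finite sum of $\lceil d/2 \rceil$ terms with coefficients bounded by $(d-2)!!$, bound each term by $(d-2)!!\,t^{d-2}e^{-t^2/2}$ using $t \ge 1$, and then simplify the Gamma-function constant by splitting on the parity of $d$. The only difference is that you derive the expansion yourself via the integration-by-parts recursion (handling the odd-$d$ remainder with the Mills ratio), whereas the paper simply cites the corresponding series expansion of the incomplete Gamma function from Abramowitz and Stegun.
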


Of the two bounds provided in \eqref{eq:le:chi square both}, the first is somewhat tighter, but obviously more cumbersome. For analytical tractability, we will use the latter bound in the sequel.

\begin{proof}[Proof of Lemma~\ref{le:chi square}]
The expression $\|\u\|_2^2$ is distributed as a chi-squared random variable with $d$ degrees of freedom. Therefore, its tail probability is given by \cite[\S 16.3]{kendall-vol1}
\beq \label{eq:chi sq cdf}
\Pr{\|\u\|_2^2 \ge t^2} = \frac{\Gamma(d/2, t^2/2)}{\Gamma(d/2)}
\eeq
where $\Gamma(a,z)$ is the incomplete Gamma function $\Gamma(a,z) \triangleq \int_z^{\infty} t^{a-1}\,e^{-t}\,dt$. It follows from the series expansion of $\Gamma(a,z)$ that \cite[\S 6.5.32]{abramowitz64}
\begin{align} \label{eq:prf chi sq 1}
&\Gamma\!\left( \frac d 2 , \frac{t^2}{2} \right)
\le \frac{e^{-t^2/2}}{2^{d/2-1} t^2} \big[ t^d + (d-2)t^{d-2}\notag\\
&\hspace{10mm}   + (d-2)(d-4)t^{d-4}+ \cdots + (d-2)!! t^m \big]
\end{align}
where $m = 1$ when $d$ is odd and $m=2$ when $d$ is even. Note that \eqref{eq:prf chi sq 1} holds with equality for even $d$, but the inequality is strict for odd $d$. Since $t \ge 1$, we can enlarge each of the terms in the square brackets in \eqref{eq:prf chi sq 1} by replacing it with $(d-2)!! t^d$. The total number of terms in brackets is $\lceil d/2 \rceil$, yielding
\beq
\Gamma\!\left( \frac d 2 , \frac{t^2}{2} \right) \le \frac{e^{-t^2/2}}{2^{d/2-1}} t^{d-2} (d-2)!! \left\lceil \frac d 2 \right\rceil.
\eeq
Substituting into \eqref{eq:chi sq cdf} demonstrates \eqref{eq:le:chi square gamma}.

To prove \eqref{eq:le:chi square}, we distinguish between even and odd values of $d$. Assume first that $d$ is even and denote $d = 2p$. We then have
\beq
\Gamma(d/2) = \Gamma(p) = (p-1)!
\eeq
and
\beq
(d-2)!! = (2p-2)!! = 2^{p-1} (p-1)!.
\eeq
Substituting these values into \eqref{eq:le:chi square gamma} and simplifying yields
\beq
\Pr{\|\u\|_2^2 \ge t^2} \le \frac d 2  t^{d-2} e^{-t^2/2}
\eeq
which clearly satisfies \eqref{eq:le:chi square}.

Similarly, assume that $d$ is odd and write $d = 2p+1$. Substituting the formula
\beq
\Gamma(d/2) = \Gamma(p + 1/2) = \frac{(2p-1)!! \sqrt{\pi}}{2^p}
\eeq
into \eqref{eq:le:chi square gamma}, we obtain
\beq \label{eq:prf chi sq 2}
\Pr{\|\u\|_2^2 \ge t^2} \le \sqrt{\frac{2}{\pi}} \, \frac{d+1}{2} t^{d-2} e^{-t^2/2}.
\eeq
It is easily verified that
\beq
\sqrt{\frac{2}{\pi}}\, \frac{d+1}{2} \le 0.8d \quad \text{for all $d \ge 1$.}
\eeq
Substituting back into \eqref{eq:prf chi sq 2} yields the required result.
\end{proof}

Our next result applies more specifically to the block sparse estimation setting. Following \cite{tropp06, ben-haim10}, we consider the event
\beq \label{eq:def B}
B = \left\{ \max_{1 \le i \le M} \|\D^*[i] \w\|_2^2 \le \tau^2 \right\}
\eeq
where
\beq \label{eq:def tau}
\tau^2 = 2 d\sigma \alpha (1 + (d-1)\nu) \log N
\eeq
for a given $\alpha > 1/(2d \log N)$. We then have the following lemma.

\begin{lemma} \label{le:prob B}
Under the setting of Section~\ref{se:setting}, assume that $\w$ is a Gaussian random vector with mean zero and covariance $\sigma^2 \I$. Then, the probability of the event $B$ of \eqref{eq:def B} is bounded by
\beq \label{eq:le:prob B}
\Pr{B} \ge 1 - \frac{0.8 (2 \alpha d \log N)^{d/2-1}}{N^{\alpha d-1}}.
\eeq
\end{lemma}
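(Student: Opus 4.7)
The plan is to bound the complementary probability $\Pr{B^c}$ by a union bound over the $M$ blocks and to apply the chi-squared tail estimate of Lemma~\ref{le:chi square} to each term.

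First, I would note that for each fixed $i$, the vector $\D^*[i]\w$ is zero-mean Gaussian with covariance $\sigma^2 \D^*[i]\D[i]$, so we may write $\D^*[i]\w = \sigma (\D^*[i]\D[i])^{1/2}\u_i$ with $\u_i \sim N(\zero, \I_d)$. Bounding the resulting quadratic form by the spectral norm of its matrix, and invoking \eqref{eq:norm Di Di} from Lemma~\ref{le:matrix norms}, gives
\[
\|\D^*[i]\w\|_2^2 \le \sigma^2 \|\D^*[i]\D[i]\|\,\|\u_i\|_2^2 \le \sigma^2 (1+(d-1)\nu)\|\u_i\|_2^2.
\]
Combined with the definition \eqref{eq:def tau} of $\tau^2$, this yields
\[
\Pr{\|\D^*[i]\w\|_2^2 > \tau^2} \le \Pr{\|\u_i\|_2^2 > 2\alpha d \log N}.
\]

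Next, I would set $t^2 = 2\alpha d \log N$. The standing hypothesis $\alpha \ge 1/(2d\log N)$ guarantees $t \ge 1$, so Lemma~\ref{le:chi square} applies and delivers
\[
\Pr{\|\u_i\|_2^2 > t^2} \le 0.8\,d\,t^{d-2} e^{-t^2/2} = 0.8\,d\,(2\alpha d \log N)^{(d-2)/2} N^{-\alpha d}.
\]

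Finally, the union bound $\Pr{B^c} \le \sum_{i=1}^{M} \Pr{\|\D^*[i]\w\|_2^2 > \tau^2}$, together with $M = N/d$ so that the factor $M\cdot d = N$ absorbs neatly, gives
\[
\Pr{B^c} \le 0.8\,(2\alpha d \log N)^{d/2 - 1}\, N^{\,1 - \alpha d},
\]
which is equivalent to \eqref{eq:le:prob B}. No step presents a serious obstacle; the only mildly subtle point is the reduction of the correlated quadratic form $\sigma^2 \u_i^*(\D^*[i]\D[i])\u_i$ to a scaled chi-squared via the spectral bound on $\D^*[i]\D[i]$. One could alternatively begin from Lemma~\ref{le:sidak gen}, which lower-bounds $\Pr{B}$ by the product of the per-block probabilities $\prod_i (1 - p_i)$, but since $\prod_i(1-p_i) \ge 1 - \sum_i p_i$, this yields the same conclusion as the union bound used above.
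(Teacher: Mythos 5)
Your argument is correct, and the per-block analysis is exactly the paper's: whiten $\D^*[i]\w$ as $\sigma(\D^*[i]\D[i])^{1/2}\u_i$, control the quadratic form by the spectral bound \eqref{eq:norm Di Di}, and apply the chi-squared tail bound of Lemma~\ref{le:chi square} at $t^2 = 2\alpha d\log N$ (correctly checking $t\ge 1$ from $\alpha \ge 1/(2d\log N)$). The only divergence is in how the $M$ per-block events are combined: the paper invokes the generalized \v{S}id\'ak inequality (Lemma~\ref{le:sidak gen}) to get $\Pr{B}\ge(1-\eta)^M$ and then immediately weakens this to $1-M\eta$ via Bernoulli's inequality, whereas you use a plain union bound on $B^c$, which lands on $1-M\eta$ directly. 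Since the paper discards the product form anyway, your route proves the stated bound without needing the Gaussian correlation inequality at all --- \v{S}id\'ak only buys something if one retains $(1-\eta)^M$, which is slightly sharper but is not what the lemma asserts. The final bookkeeping $M\cdot 0.8\,d\,(2\alpha d\log N)^{d/2-1}N^{-\alpha d} = 0.8\,(2\alpha d\log N)^{d/2-1}N^{1-\alpha d}$ using $N=Md$ matches the paper. No gaps.
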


\begin{proof}
% slightly more detailed proof: notebook p.701 top
Observe that $\D^*[i]\w$ is a $d$-dimensional Gaussian random vector with mean zero and covariance $\sigma^2 \D^*[i] \D[i]$. Therefore, the random vector
\beq
\u = \frac 1 \sigma  (\D^*[i] \D[i])^{-1/2} \D^*[i] \w
\eeq
is a $d$-dimensional Gaussian random vector with mean zero and covariance $\I$. We thus have
\begin{align}
\Pr{\|\D^*[i]\w\|_2^2 \le \tau^2}
&=   \Pr{\sigma^2\|(\D^*[i]\D[i])^{1/2}\u\|_2^2 \le \tau^2} \notag\\
&\ge \Pr{\sigma^2\|\D^*[i]\D[i]\| \cdot \|\u\|_2^2 \le \tau^2} \notag\\
&\ge \Pr{\|\u\|_2^2 \le \frac{\tau^2}{\sigma^2 (1 + (d-1)\nu)}}
\end{align}
where, in the last step, we used \eqref{eq:norm Di Di}. Using Lemma~\ref{le:chi square} and substituting the value \eqref{eq:def tau} of $\tau^2$, we obtain
\begin{align} \label{eq:prf prob B 1}
&\Pr{\|\D^*[i]\w\|_2^2 \le \tau^2} \ge 1 - \eta
\end{align}
where
\begin{align} \label{eq:eta}
\eta &\triangleq 1 - 0.8 d (2 \alpha d \log N)^{d/2-1} \exp(-d\alpha \log N) \notag\\
&= 1 - \frac{ 0.8 d (2 \alpha d \log N)^{d/2-1} }{N^{\alpha d}}.
\end{align}

Using Lemma~\ref{le:sidak gen}, we have
\begin{align}
\Pr{B} &\ge \prod_{i=1}^M \Pr{ \|\D^*[i]\w\|_2^2 \le \tau^2 } \notag\\
&= (1 - \eta)^M.
\end{align}
When $\eta > 1$, the bound \eqref{eq:le:prob B} is meaningless and the theorem holds vacuously. Otherwise, when $\eta \le 1$, we have
\beq
\Pr{B} \ge 1 - M \eta
\eeq
where we used the fact that $(1-\eta)^M \ge 1-M\eta$ whenever $\eta \le 1$ and $M \ge 1$. Substituting the value of $\eta$ from \eqref{eq:eta} and recalling that $N = Md$ yields the required result.
\end{proof}

We are now ready to prove Theorems \ref{th:bth gauss} and \ref{th:bomp gauss}.

\begin{proof}[Proof of Theorem~\ref{th:bth gauss}]
By Lemma~\ref{le:prob B}, the event $B$ of \eqref{eq:def B} occurs with probability exceeding \eqref{eq:bth gauss prob}. Furthermore, using \eqref{eq:bth gauss cond}, it follows from Lemma~\ref{le:aleph} that under the event $B$, all blocks in the correct support set $S$ are more highly correlated with $\y$ than the off-support blocks. Consequently, when $B$ occurs, we have $S \subseteq \hS$, where $\hS$ is the support estimated by the BTH algorithm. Note, however, that the estimated set $\hS$ will contain additional blocks not in $S$ if $s<k$. It follows that
\begin{align} \label{eq:prf th:bth gauss 1}
&\|\x - \xbth\|_2^2
= \|\x_\hS - (\xbth)_\hS\|_2^2 \notag\\
&\hspace{5mm} =   \|\D_\hS^\pinv \D_\hS \x_\hS - \D_\hS^\pinv \y \|_2^2 \notag\\
&\hspace{5mm} \le \|(\D_\hS^* \D_\hS)^{-1}\|^2 \cdot \| \D_\hS^*\w \|_2^2 \notag\\
&\hspace{5mm} \le \|(\D_\hS^* \D_\hS)^{-1}\|^2 \cdot \sum_{i \in \hS} \| \D^*[i]\w \|_2^2
\end{align}
where we have used the fact that $\D_\hS^\pinv \D_\hS = \I$, which is a consequence of the assumption that $\D_\hS$ has full row rank (see Section~\ref{se:setting}). Using \eqref{eq:norm Ds Ds} and \eqref{eq:def B}, we have that when $B$ occurs
\begin{align} \label{eq:prf th:bth gauss 2}
&\|\x - \xbth\|_2^2
\le \frac{k \tau^2}{(1 - (d-1)\nu - (k-1)d\mu_B)^2}.
\end{align}
Substituting the value \eqref{eq:def tau} of $\tau$ yields the required result \eqref{eq:bth gauss perf}.
\end{proof}

\begin{proof}[Proof of Theorem~\ref{th:bomp gauss}]
It follows from Lemma~\ref{le:prob B} that the event $B$ occurs with probability exceeding \eqref{eq:bth gauss prob}. Our goal in this proof will thus be to show that, if $B$ does occur, then the BOMP algorithm correctly identifies all elements of the support $S$ of $\x$ (although some off-support elements may be identified as well if $s<k$). The remainder of the proof will then follow the steps of the proof of Theorem~\ref{th:bth gauss}.

To demonstrate that the correct support is recovered, we begin by analyzing the first iteration of the BOMP algorithm. This iteration chooses a block $i_1$ having maximal correlation $\|\D^*[i_1] \y\|_2$ with the measurements $\y$. Now, since $\xmax \ge \xmin$, the condition \eqref{eq:bomp gauss cond} implies \eqref{eq:maxmax cond}, with $\tau$ given by \eqref{eq:def tau}. Consequently, by Lemma~\ref{le:aleph}, under the event $B$ we find that the first iteration of BOMP identifies an element $i_1$ in the correct support set $S$.

To show that the next $s-1$ iterations of the BOMP algorithm also identify support elements, we proceed by induction. Specifically, assume that $\ell-1 < s$ iterations have correctly identified elements $i_1,\ldots,i_{\ell-1}$, all of which are in the support set $S$. As in the proof of Theorem~\ref{th:bomp adv}, define the estimation error after $\ell-1$ iterations as $\tx^{\ell-1} \triangleq x - x^{\ell-1}$. By the induction hypothesis, $\supp(\tx) \subset S$, and clearly $\supp(\x) = S$. Thus $\supp(\tx) \subset S$, i.e., the support of $\tx$ is a strict subset of $S$. Using the same arguments as in the proof of Theorem~\ref{th:bomp adv}, we find that $\tx^{\ell-1}$ contains a block whose norm is at least $\xmin$. Therefore, we can consider a modified estimation problem, in which $\r^{\ell}$ is a noisy measurement vector of the block sparse signal $\tx^{\ell-1}$. Together with \eqref{eq:bomp gauss cond}, this implies that \eqref{eq:maxmax cond} holds for the modified setting. Therefore, by \eqref{eq:maxmax res}, the block in $\r^\ell$ having maximal correlation with the measurements is an element of $S$. Consequently, BOMP will correctly identify a support element in the $\ell$th iteration. Since the BOMP algorithm never selects a previously chosen support element, we find by induction that the support set $S$ will be identified in full after $s$ iterations. If $s<k$, then the remaining $k-s$ iterations will identify arbitrary off-support elements.

Denoting by $\hS$ the complete $k$-element support set identified by the BOMP approach, we thus have $S \subseteq \hS$. Following the technique \eqref{eq:prf th:bth gauss 1}--\eqref{eq:prf th:bth gauss 2} used in the proof of Theorem~\ref{th:bth gauss} thus yields the required result \eqref{eq:bomp gauss perf}.
\end{proof}

\bibliographystyle{IEEEtran}
\bibliography{IEEEabrv,zvika}

% Generated by IEEEtran.bst, version: 1.13 (2008/09/30)
\begin{thebibliography}{10}
\providecommand{\url}[1]{#1}
\csname url@samestyle\endcsname
\providecommand{\newblock}{\relax}
\providecommand{\bibinfo}[2]{#2}
\providecommand{\BIBentrySTDinterwordspacing}{\spaceskip=0pt\relax}
\providecommand{\BIBentryALTinterwordstretchfactor}{4}
\providecommand{\BIBentryALTinterwordspacing}{\spaceskip=\fontdimen2\font plus
\BIBentryALTinterwordstretchfactor\fontdimen3\font minus
  \fontdimen4\font\relax}
\providecommand{\BIBforeignlanguage}[2]{{%
\expandafter\ifx\csname l@#1\endcsname\relax
\typeout{** WARNING: IEEEtran.bst: No hyphenation pattern has been}%
\typeout{** loaded for the language `#1'. Using the pattern for}%
\typeout{** the default language instead.}%
\else
\language=\csname l@#1\endcsname
\fi
#2}}
\providecommand{\BIBdecl}{\relax}
\BIBdecl

\bibitem{donoho06b}
D.~L. Donoho, ``Compressed sensing,'' \emph{{IEEE} Trans. Inf. Theory},
  vol.~52, no.~4, pp. 1289--1306, 2006.

\bibitem{candes06}
E.~J. Cand\`es, J.~K. Romberg, and T.~Tao, ``Stable signal recovery from
  incomplete and inaccurate measurements,'' \emph{Comm. Pure Appl. Math.}, vol.
  {LIX}, pp. 1207--1223, 2006.

\bibitem{bruckstein09}
A.~M. Bruckstein, D.~L. Donoho, and M.~Elad, ``From sparse solutions of systems
  of equations to sparse modeling of signals and images,'' \emph{{SIAM}
  Review}, vol.~51, no.~1, pp. 34--81, Feb. 2009.

\bibitem{tropp06}
J.~A. Tropp, ``Just relax: Convex programming methods for identifying sparse
  signals in noise,'' \emph{{IEEE} Trans. Inf. Theory}, vol.~52, no.~3, pp.
  1030--1051, 2006.

\bibitem{donoho06}
D.~L. Donoho, M.~Elad, and V.~N. Temlyakov, ``Stable recovery of sparse
  overcomplete representations in the presence of noise,'' \emph{{IEEE} Trans.
  Inf. Theory}, vol.~52, no.~1, pp. 6--18, 2006.

\bibitem{elad06}
M.~Elad and M.~Aharon, ``Image denoising via sparse and redundant
  representations over learned dictionaries,'' \emph{{IEEE} Trans. Image
  Process.}, vol.~15, no.~12, pp. 3736--3745, Dec. 2006.

\bibitem{bronstein05}
M.~M. Bronstein, A.~M. Bronstein, M.~Zibulevsky, and Y.~Y. Zeevi, ``Blind
  deconvolution of images using optimal sparse representations,'' \emph{{IEEE}
  Trans. Image Process.}, vol.~14, no.~6, pp. 726--736, Jun. 2005.

\bibitem{fadili09}
M.~J. Fadili, J.-L. Starck, and F.~Murtagh, ``Inpainting and zooming using
  sparse representations,'' \emph{The Computer Journal}, vol.~52, no.~1, pp.
  64--79, 2009.

\bibitem{lu08b}
Y.~M. Lu and M.~N. Do, ``A theory for sampling signals from a union of
  subspaces,'' \emph{{IEEE} Trans. Signal Process.}, vol.~56, no.~6, pp.
  2334--2345, Jun. 2008.

\bibitem{blumensath09}
T.~Blumensath and M.~E. Davies, ``Sampling theorems for signals from the union
  of finite-dimensional linear subspaces,'' \emph{{IEEE} Trans. Inf. Theory},
  vol.~55, no.~4, pp. 1872--1882, Apr. 2009.

\bibitem{eldar09}
Y.~C. Eldar and M.~Mishali, ``Robust recovery of signals from a structured
  union of subspaces,'' \emph{{IEEE} Trans. Inf. Theory}, vol.~55, no.~11, pp.
  5302--5316, Nov. 2009.

\bibitem{gedalyahu09}
K.~Gedalyahu and Y.~C. Eldar, ``Time delay estimation from low rate samples: A
  union of subspaces approach,'' \emph{{IEEE} Trans. Signal Process.}, vol.~58,
  no.~6, pp. 3017--3031, Jun. 2010.

\bibitem{gedalyahu11}
\BIBentryALTinterwordspacing
K.~Gedalyahu, R.~Tur, and Y.~C. Eldar, ``Multichannel sampling of pulse streams
  at the rate of innovation,'' \emph{{IEEE} Trans. Signal Process.}, submitted.
  [Online]. Available: \url{http://arxiv.org/abs/1004.5070}
\BIBentrySTDinterwordspacing

\bibitem{mishali09}
M.~Mishali and Y.~C. Eldar, ``From theory to practice: Sub-{N}yquist sampling
  of sparse wideband analog signals,'' \emph{{IEEE} J. Sel. Topics Signal
  Process.}, vol.~4, no.~2, pp. 375--391, Apr. 2010.

\bibitem{mishali09b}
------, ``Blind multi-band signal reconstruction: Compressed sensing for analog
  signals,'' \emph{{IEEE} Trans. Signal Process.}, vol.~57, no.~3, pp.
  993--1009, Mar. 2009.

\bibitem{mishali10}
\BIBentryALTinterwordspacing
M.~Mishali, Y.~C. Eldar, O.~Dounaevsky, and E.~Shoshan, ``Xampling: Analog to
  digital at sub-{N}yquist rates,'' \emph{IET J. Circuits, Devices and
  Systems}, 2010, to appear. [Online]. Available:
  \url{http://arxiv.org/abs/0912.2495}
\BIBentrySTDinterwordspacing

\bibitem{yuan06}
M.~Yuan and Y.~Lin, ``Model selection and estimation in regression with grouped
  variables,'' \emph{J. Royal Statist. Soc. B}, vol.~68, pp. 49--67, 2006.

\bibitem{stojnic09}
M.~Stojnic, F.~Parvaresh, and B.~Hassibi, ``On the reconstruction of
  block-sparse signals with an optimal number of measurements,'' \emph{{IEEE}
  Trans. Signal Process.}, vol.~57, no.~8, pp. 3075--3085, Aug. 2009.

\bibitem{eldar10}
Y.~C. Eldar, P.~Kuppinger, and H.~B\"{o}lcskei, ``Block-sparse signals:
  Uncertainty relations and efficient recovery,'' \emph{{IEEE} Trans. Signal
  Process.}, vol.~58, no.~6, pp. 3042--3054, Jun. 2010.

\bibitem{candes07}
E.~Candes and T.~Tao, ``The {Dantzig} selector: Statistical estimation when $p$
  is much larger than $n$,'' \emph{Ann. Statist.}, vol.~35, no.~6, pp.
  2313--2351, 2007, with discussion.

\bibitem{ben-haim10}
\BIBentryALTinterwordspacing
Z.~Ben-Haim, Y.~C. Eldar, and M.~Elad, ``Coherence-based performance guarantees
  for estimating a sparse vector under random noise,'' \emph{{IEEE} Trans.
  Signal Process.}, 2010, to appear. [Online]. Available:
  \url{http://arxiv.org/abs/0903.4579}
\BIBentrySTDinterwordspacing

\bibitem{pati93}
Y.~C. Pati, R.~Rezaiifar, and P.~S. Krishnaprasad, ``Orthogonal matching
  pursuit: Recursive function approximation with applications to wavelet
  decomposition,'' in \emph{Proc. 27th Asilomar Conf. Signals, Systems, and
  Computers}, Nov. 1993, pp. 40--44.

\bibitem{kay93}
S.~M. Kay, \emph{Fundamentals of Statistical Signal Processing: Estimation
  Theory}.\hskip 1em plus 0.5em minus 0.4em\relax Englewood Cliffs, NJ:
  Prentice Hall, 1993.

\bibitem{gorman90}
J.~D. Gorman and A.~O. Hero, ``Lower bounds for parametric estimation with
  constraints,'' \emph{{IEEE} Trans. Inf. Theory}, vol.~26, no.~6, pp.
  1285--1301, Nov. 1990.

\bibitem{StoicaNg98}
P.~Stoica and B.~C. Ng, ``On the {C}ram\'er--{R}ao bound under parametric
  constraints,'' \emph{{IEEE} Signal Process. Lett.}, vol.~5, no.~7, pp.
  177--179, 1998.

\bibitem{ben-haim09}
Z.~Ben-Haim and Y.~C. Eldar, ``On the constrained {Cram\'er}-–{Rao} bound with
  a singular {F}isher information matrix,'' \emph{{IEEE} Signal Process.
  Lett.}, vol.~16, no.~6, pp. 453--456, Jun. 2009.

\bibitem{ben-haim09b}
------, ``The {C}ram\'er--{R}ao bound for estimating a sparse parameter
  vector,'' \emph{{IEEE} Trans. Signal Process.}, vol.~58, no.~6, pp.
  3384--3389, Jun. 2010.

\bibitem{ben-haim10b}
Z.~Ben-Haim, Y.~C. Eldar, and M.~Elad, ``Coherence-based near-oracle
  performance guarantees for sparse estimation under {Gaussian} noise,'' in
  \emph{Proc. Int. Conf. Acoustics, Speech, and Signal Processing (ICASSP
  2010)}, Dallas, TX, Mar. 2010, pp. 3590--3593.

\bibitem{golub96}
G.~H. Golub and C.~F. Van~Loan, \emph{Matrix Computations}, 3rd~ed.\hskip 1em
  plus 0.5em minus 0.4em\relax Baltimore, MD: Johns Hopkins University Press,
  1996.

\bibitem{feingold62}
D.~G. Feingold and R.~S. Varga, ``Block diagonally dominant matrices and
  generalizations of the {G}erschgorin circle theorem,'' \emph{Pacific J.
  Math.}, vol.~12, no.~4, pp. 1241--1250, 1962.

\bibitem{sidak67}
Z.~\v{S}id\'ak, ``Rectangular confidence regions for the means of multivariate
  normal distributions,'' \emph{J. Amer. Statist. Assoc.}, vol.~62, no. 318,
  pp. 626--633, Jun. 1967.

\bibitem{anderson55}
T.~W. Anderson, ``The integral of a symmetric unimodal function over a
  symmetric convex set and some probability inequalities,'' \emph{Proc. Am.
  Math. Soc.}, vol.~6, pp. 170--176, 1955.

\bibitem{kendall-vol1}
A.~Stuart and J.~K. Ord, \emph{Kendall's Advanced Theory of Statistics},
  6th~ed.\hskip 1em plus 0.5em minus 0.4em\relax London: Edward Arnold, 1994,
  vol.~1.

\bibitem{abramowitz64}
M.~Abramowitz and I.~A. Stegun, \emph{Handbook of Mathematical Functions with
  Formulas, Graphs, and Mathematical Tables}.\hskip 1em plus 0.5em minus
  0.4em\relax New York: Dover, 1964.

\end{thebibliography}

\end{document}